\title{Double Dilation $\neq$ Double Mixing (extended abstract)}
\author{Maaike Zwart and Bob Coecke
\institute{University of Oxford \\ Department of Computer Science}
\email{maaike.zwart@cs.ox.ac.uk - bob.coecke@cs.ox.ac.uk}}
\tikzstyle{none}=[inner sep=0pt]
\definecolor{hexcolor0xff0000}{rgb}{1.000,0.000,0.000}
\definecolor{hexcolor0x000000}{rgb}{0.000,0.000,0.000}
\definecolor{hexcolor0x00ff00}{rgb}{0.000,1.000,0.000}
\definecolor{hexcolor0x000000}{rgb}{0.000,0.000,0.000}
\definecolor{hexcolor0xffff00}{rgb}{1.000,1.000,0.000}
\definecolor{hexcolor0xffffff}{rgb}{1.000,1.000,1.000}
\tikzstyle{rn}=[circle,fill=hexcolor0xff0000,draw=hexcolor0x000000,line width=0.8 pt]
\tikzstyle{gn}=[circle,fill=hexcolor0x00ff00,draw=hexcolor0x000000,line width=0.8 pt]
\tikzstyle{yn}=[circle,fill=hexcolor0xffff00,draw=hexcolor0x000000,line width=0.8 pt]
\tikzstyle{wn}=[circle,fill=hexcolor0xffffff,draw=hexcolor0x000000,line width=0.8 pt]
\tikzstyle{wnthick}=[circle,fill=hexcolor0xffffff,draw=hexcolor0x000000,line width=2.500]
\tikzstyle{simple}=[-,draw=hexcolor0x000000,line width=2.000]
\tikzstyle{arrow}=[-,draw=hexcolor0x000000,postaction={decorate},decoration={markings,mark=at position .5 with {\arrow{>}}},line width=2.000]
\tikzstyle{tick}=[-,draw=hexcolor0x000000,postaction={decorate},decoration={markings,mark=at position .5 with {\draw (0,-0.1) -- (0,0.1);}},line width=2.000]
\tikzstyle{halfthickness}=[-,draw=hexcolor0x000000,line width=0.500]
\tikzstyle{thick}=[-,draw=hexcolor0x000000,line width=2.500]
\tikzstyle{thicker}=[-,draw=hexcolor0x000000,line width=4.000]
\tikzstyle{env}=[copoint,regular polygon rotate=0,minimum width=0.2cm, fill=black]
\tikzstyle{probs}=[shape=semicircle,fill=white,draw=black,shape border rotate=180,minimum width=1.2cm]
\tikzstyle{every picture}=[baseline=-0.25em,scale=0.5]
\tikzstyle{dotpic}=[] 
\tikzstyle{diredges}=[every to/.style={diredge}]
\tikzstyle{math matrix}=[matrix of math nodes,left delimiter=(,right delimiter=),inner sep=2pt,column sep=1em,row sep=0.5em,nodes={inner sep=0pt},text height=1.5ex, text depth=0.25ex]
\tikzstyle{inline text}=[text height=1.5ex, text depth=0.25ex,yshift=0.5mm]
\tikzstyle{label}=[font=\footnotesize,text height=1.5ex, text depth=0.25ex,yshift=0.5mm]
\tikzstyle{left label}=[label,anchor=east,xshift=1.5mm]
\tikzstyle{right label}=[label,anchor=west,xshift=-1.5mm]
\tikzstyle{braceedge}=[decorate,decoration={brace,amplitude=2mm,raise=-1mm}]
\tikzstyle{small braceedge}=[decorate,decoration={brace,amplitude=1mm,raise=-1mm}]
\tikzstyle{doubled}=[line width=1.6pt] 
\tikzstyle{boldedge}=[doubled,shorten <=-0.17mm,shorten >=-0.17mm]
\tikzstyle{boldedgegray}=[doubled,gray,shorten <=-0.17mm,shorten >=-0.17mm]
\tikzstyle{semidoubled}=[line width=1.4pt] 
\tikzstyle{semiboldedgegray}=[semidoubled,gray,shorten <=-0.17mm,shorten >=-0.17mm]
\tikzstyle{boldedgedashed}=[very thick,dashed,shorten <=-0.17mm,shorten >=-0.17mm]
\tikzstyle{vboldedgedashed}=[doubled,dashed,shorten <=-0.17mm,shorten >=-0.17mm]
\tikzstyle{left hook arrow}=[left hook-latex]
\tikzstyle{right hook arrow}=[right hook-latex]
\tikzstyle{sembracket}=[line width=0.5pt,shorten <=-0.07mm,shorten >=-0.07mm]
\tikzstyle{causal edge}=[->,thick,gray]
\tikzstyle{causal nondir}=[thick,gray]
\tikzstyle{timeline}=[thick,gray, dashed]
\tikzstyle{cedge}=[<->,thick,gray!70!white]
\tikzstyle{empty diagram}=[draw=gray!40!white,dashed,shape=rectangle,minimum width=1cm,minimum height=1cm]
\tikzstyle{empty diagram small}=[draw=gray!50!white,dashed,shape=rectangle,minimum width=0.6cm,minimum height=0.5cm]
\tikzstyle{dot}=[inner sep=0mm,minimum width=2mm,minimum height=2mm,draw,shape=circle]
\tikzstyle{ddot}=[inner sep=0mm, doubled, minimum width=2.5mm,minimum height=2.5mm,draw,shape=circle]
\tikzstyle{black dot}=[dot,fill=black]
\tikzstyle{white dot}=[dot,fill=white,,text depth=-0.2mm]
\tikzstyle{green dot}=[white dot] 
\tikzstyle{gray dot}=[dot,fill=gray!40!white,,text depth=-0.2mm]
\tikzstyle{red dot}=[gray dot] 
\tikzstyle{black ddot}=[ddot,fill=black]
\tikzstyle{white ddot}=[ddot,fill=white]
\tikzstyle{gray ddot}=[ddot,fill=gray!40!white]
\tikzstyle{gray edge}=[gray!40!white]
\tikzstyle{small dot}=[inner sep=0.5mm,minimum width=0pt,minimum height=0pt,draw,shape=circle]
\tikzstyle{small black dot}=[small dot,fill=black]
\tikzstyle{small white dot}=[small dot,fill=white]
\tikzstyle{small gray dot}=[small dot,fill=gray!40!white]
\tikzstyle{causal dot}=[inner sep=0.4mm,minimum width=0pt,minimum height=0pt,draw=white,shape=circle,fill=gray!40!white]
\tikzstyle{phase dimensions}=[minimum size=5mm,font=\footnotesize,rectangle,rounded corners=2.5mm,inner sep=0.2mm,outer sep=-2mm]
\tikzstyle{dphase dimensions}=[minimum size=5mm,font=\footnotesize,rectangle,rounded corners=2.5mm,inner sep=0.2mm,outer sep=-2mm]
\tikzstyle{white phase dot}=[dot,fill=white,phase dimensions]
\tikzstyle{white phase ddot}=[ddot,fill=white,dphase dimensions]
\tikzstyle{white rect ddot}=[draw=black,fill=white,doubled,minimum size=5mm,font=\footnotesize,rectangle,rounded corners=2.5mm,inner sep=0.2mm]
\tikzstyle{gray rect ddot}=[draw=black,fill=gray!40!white,doubled,minimum size=6mm,font=\footnotesize,rectangle,rounded corners=3mm]
\tikzstyle{gray phase dot}=[dot,fill=gray!40!white,phase dimensions]
\tikzstyle{gray phase ddot}=[ddot,fill=gray!40!white,dphase dimensions]
\tikzstyle{grey phase dot}=[gray phase dot]
\tikzstyle{grey phase ddot}=[gray phase ddot]
\tikzstyle{small phase dimensions}=[minimum size=4mm,font=\tiny,rectangle,rounded corners=2mm,inner sep=0.2mm,outer sep=-2mm]
\tikzstyle{small dphase dimensions}=[minimum size=4mm,font=\tiny,rectangle,rounded corners=2mm,inner sep=0.2mm,outer sep=-2mm]
\tikzstyle{small gray phase dot}=[dot,fill=gray!40!white,small phase dimensions]
\tikzstyle{small gray phase ddot}=[ddot,fill=gray!40!white,small dphase dimensions]
\tikzstyle{small map}=[draw,shape=rectangle,minimum height=4mm,minimum width=4mm,fill=white]
\tikzstyle{cnot}=[fill=white,shape=circle,inner sep=-1.4pt]
\tikzstyle{asym hadamard}=[fill=white,draw,shape=NEbox,inner sep=0.6mm,font=\footnotesize,minimum height=4mm]
\tikzstyle{asym hadamard conj}=[fill=white,draw,shape=NWbox,inner sep=0.6mm,font=\footnotesize,minimum height=4mm]
\tikzstyle{asym hadamard dag}=[fill=white,draw,shape=SEbox,inner sep=0.6mm,font=\footnotesize,minimum height=4mm]
\tikzstyle{hadamard}=[fill=white,draw,inner sep=0.6mm,font=\footnotesize,minimum height=4mm,minimum width=4mm]
\tikzstyle{small hadamard}=[fill=white,draw,inner sep=0.6mm,minimum height=1.5mm,minimum width=1.5mm]
\tikzstyle{dhadamard}=[hadamard,doubled]
\tikzstyle{small dhadamard}=[small hadamard,doubled]
\tikzstyle{small dhadamard rotate}=[small hadamard,doubled,rotate=45]
\tikzstyle{antipode}=[white dot,inner sep=0.3mm,font=\footnotesize]
\tikzstyle{scalar}=[diamond,draw,inner sep=0.5pt,font=\small]
\tikzstyle{dscalar}=[diamond,doubled, draw,inner sep=0.5pt,font=\small]
\tikzstyle{small box}=[rectangle,inline text,fill=white,draw,minimum height=5mm,yshift=-0.5mm,minimum width=5mm,font=\small]
\tikzstyle{small gray box}=[small box,fill=gray!30]
\tikzstyle{medium box}=[rectangle,inline text,fill=white,draw,minimum height=5mm,yshift=-0.5mm,minimum width=10mm,font=\small]
\tikzstyle{square box}=[small box] 
\tikzstyle{medium gray box}=[small box,fill=gray!30]
\tikzstyle{semilarge box}=[rectangle,inline text,fill=white,draw,minimum height=5mm,yshift=-0.5mm,minimum width=12.5mm,font=\small]
\tikzstyle{large box}=[rectangle,inline text,fill=white,draw,minimum height=5mm,yshift=-0.5mm,minimum width=15mm,font=\small]
\tikzstyle{large gray box}=[small box,fill=gray!30]
\tikzstyle{Bayes box}=[rectangle,fill=black,draw, minimum height=3mm, minimum width=3mm]
\tikzstyle{gray square point}=[small box,fill=gray!50]
\tikzstyle{dphase box white}=[dhadamard]
\tikzstyle{dphase box gray}=[dhadamard,fill=gray!50!white]
\tikzstyle{point}=[regular polygon,regular polygon sides=3,draw,scale=0.75,inner sep=-0.5pt,minimum width=9mm,fill=white,regular polygon rotate=180]
\tikzstyle{copoint}=[regular polygon,regular polygon sides=3,draw,scale=0.75,inner sep=-0.5pt,minimum width=9mm,fill=white]
\tikzstyle{dpoint}=[point,doubled]
\tikzstyle{dcopoint}=[copoint,doubled]
\tikzstyle{wide copoint}=[fill=white,draw,shape=isosceles triangle,shape border rotate=90,isosceles triangle stretches=true,inner sep=0pt,minimum width=1.5cm,minimum height=6.12mm]
\tikzstyle{wide point}=[fill=white,draw,shape=isosceles triangle,shape border rotate=-90,isosceles triangle stretches=true,inner sep=0pt,minimum width=1.5cm,minimum height=6.12mm,yshift=-0.0mm]
\tikzstyle{wide point plus}=[fill=white,draw,shape=isosceles triangle,shape border rotate=-90,isosceles triangle stretches=true,inner sep=0pt,minimum width=1.74cm,minimum height=7mm,yshift=-0.0mm]
\tikzstyle{wide dpoint}=[fill=white,doubled,draw,shape=isosceles triangle,shape border rotate=-90,isosceles triangle stretches=true,inner sep=0pt,minimum width=1.5cm,minimum height=6.12mm,yshift=-0.0mm]
\tikzstyle{wide dcopoint}=[fill=white,doubled,draw,shape=isosceles triangle,shape border rotate=90,isosceles triangle stretches=true,inner sep=0pt,minimum width=1.5cm,minimum height=6.12mm,yshift=-0.0mm]
\tikzstyle{tinypoint}=[regular polygon,regular polygon sides=3,draw,scale=0.55,inner sep=-0.15pt,minimum width=6mm,fill=white,regular polygon rotate=180]
\tikzstyle{white point}=[point]
\tikzstyle{white dpoint}=[dpoint]
\tikzstyle{green point}=[white point] 
\tikzstyle{white copoint}=[copoint]
\tikzstyle{gray point}=[point,fill=gray!40!white]
\tikzstyle{gray dpoint}=[gray point,doubled]
\tikzstyle{red point}=[gray point] 
\tikzstyle{gray copoint}=[copoint,fill=gray!40!white]
\tikzstyle{gray dcopoint}=[gray copoint,doubled]
\tikzstyle{white point guide}=[regular polygon,regular polygon sides=3,font=\scriptsize,draw,scale=0.65,inner sep=-0.5pt,minimum width=9mm,fill=white,regular polygon rotate=180]
\tikzstyle{black point}=[point,fill=black,font=\color{white}]
\tikzstyle{black copoint}=[copoint,fill=black,font=\color{white}]
\tikzstyle{tiny gray point}=[tinypoint,fill=gray!40!white]
\tikzstyle{diredge}=[->]
\tikzstyle{ddiredge}=[<->]
\tikzstyle{rdiredge}=[<-]
\tikzstyle{thickdiredge}=[->, very thick]
\tikzstyle{pointer edge}=[->,very thick,gray]
\tikzstyle{pointer edge part}=[very thick,gray]
\tikzstyle{dashed edge}=[dashed]
\tikzstyle{thick dashed edge}=[very thick,dashed]
\tikzstyle{thick gray dashed edge}=[thick dashed edge,gray!40]
\tikzstyle{thick map edge}=[very thick,|->]
\newcommand{\boxshape}[3]{%
\pgfdeclareshape{#1}{
\inheritsavedanchors[from=rectangle] 
\inheritanchorborder[from=rectangle]
\inheritanchor[from=rectangle]{center}
\inheritanchor[from=rectangle]{north}
\inheritanchor[from=rectangle]{south}
\inheritanchor[from=rectangle]{west}
\inheritanchor[from=rectangle]{east}
\backgroundpath{
\southwest \pgf@xa=\pgf@x \pgf@ya=\pgf@y
\northeast \pgf@xb=\pgf@x \pgf@yb=\pgf@y

\@tempdima=#2
\@tempdimb=#3

\pgfpathmoveto{\pgfpoint{\pgf@xa - 5pt + \@tempdima}{\pgf@ya}}
\pgfpathlineto{\pgfpoint{\pgf@xa - 5pt - \@tempdima}{\pgf@yb}}
\pgfpathlineto{\pgfpoint{\pgf@xb + 5pt + \@tempdimb}{\pgf@yb}}
\pgfpathlineto{\pgfpoint{\pgf@xb + 5pt - \@tempdimb}{\pgf@ya}}
\pgfpathlineto{\pgfpoint{\pgf@xa - 5pt + \@tempdima}{\pgf@ya}}
\pgfpathclose
}
}}
\tikzstyle{cloud}=[shape=cloud,draw,minimum width=1.5cm,minimum height=1.5cm]
\tikzstyle{map}=[draw,shape=NEbox,inner sep=2pt,minimum height=6mm,fill=white]
\tikzstyle{dashedmap}=[draw,dashed,shape=NEbox,inner sep=2pt,minimum height=6mm,fill=white]
\tikzstyle{mapdag}=[draw,shape=SEbox,inner sep=2pt,minimum height=6mm,fill=white]
\tikzstyle{mapadj}=[draw,shape=SEbox,inner sep=2pt,minimum height=6mm,fill=white]
\tikzstyle{maptrans}=[draw,shape=SWbox,inner sep=2pt,minimum height=6mm,fill=white]
\tikzstyle{mapconj}=[draw,shape=NWbox,inner sep=2pt,minimum height=6mm,fill=white]
\tikzstyle{medium map}=[draw,shape=NEbox,inner sep=2pt,minimum height=6mm,fill=white,minimum width=7mm]
\tikzstyle{medium map dag}=[draw,shape=SEbox,inner sep=2pt,minimum height=6mm,fill=white,minimum width=7mm]
\tikzstyle{medium map adj}=[draw,shape=SEbox,inner sep=2pt,minimum height=6mm,fill=white,minimum width=7mm]
\tikzstyle{medium map trans}=[draw,shape=SWbox,inner sep=2pt,minimum height=6mm,fill=white,minimum width=7mm]
\tikzstyle{medium map conj}=[draw,shape=NWbox,inner sep=2pt,minimum height=6mm,fill=white,minimum width=7mm]
\tikzstyle{semilarge map}=[draw,shape=NEbox,inner sep=2pt,minimum height=6mm,fill=white,minimum width=9.5mm]
\tikzstyle{semilarge map trans}=[draw,shape=SWbox,inner sep=2pt,minimum height=6mm,fill=white,minimum width=9.5mm]
\tikzstyle{semilarge map adj}=[draw,shape=SEbox,inner sep=2pt,minimum height=6mm,fill=white,minimum width=9.5mm]
\tikzstyle{semilarge map dag}=[draw,shape=SEbox,inner sep=2pt,minimum height=6mm,fill=white,minimum width=9.5mm]
\tikzstyle{semilarge map conj}=[draw,shape=NWbox,inner sep=2pt,minimum height=6mm,fill=white,minimum width=9.5mm]
\tikzstyle{large map}=[draw,shape=NEbox,inner sep=2pt,minimum height=6mm,fill=white,minimum width=12mm]
\tikzstyle{large map conj}=[draw,shape=NWbox,inner sep=2pt,minimum height=6mm,fill=white,minimum width=12mm]
\tikzstyle{very large map}=[draw,shape=NEbox,inner sep=2pt,minimum height=6mm,fill=white,minimum width=17mm]
\tikzstyle{medium dmap}=[draw,doubled,shape=NEbox,inner sep=2pt,minimum height=6mm,fill=white,minimum width=7mm]
\tikzstyle{medium dmap dag}=[draw,doubled,shape=SEbox,inner sep=2pt,minimum height=6mm,fill=white,minimum width=7mm]
\tikzstyle{medium dmap adj}=[draw,doubled,shape=SEbox,inner sep=2pt,minimum height=6mm,fill=white,minimum width=7mm]
\tikzstyle{medium dmap trans}=[draw,doubled,shape=SWbox,inner sep=2pt,minimum height=6mm,fill=white,minimum width=7mm]
\tikzstyle{medium dmap conj}=[draw,doubled,shape=NWbox,inner sep=2pt,minimum height=6mm,fill=white,minimum width=7mm]
\tikzstyle{semilarge dmap}=[draw,doubled,shape=NEbox,inner sep=2pt,minimum height=6mm,fill=white,minimum width=9.5mm]
\tikzstyle{semilarge dmap trans}=[draw,doubled,shape=SWbox,inner sep=2pt,minimum height=6mm,fill=white,minimum width=9.5mm]
\tikzstyle{semilarge dmap adj}=[draw,doubled,shape=SEbox,inner sep=2pt,minimum height=6mm,fill=white,minimum width=9.5mm]
\tikzstyle{semilarge dmap dag}=[draw,doubled,shape=SEbox,inner sep=2pt,minimum height=6mm,fill=white,minimum width=9.5mm]
\tikzstyle{semilarge dmap conj}=[draw,doubled,shape=NWbox,inner sep=2pt,minimum height=6mm,fill=white,minimum width=9.5mm]
\tikzstyle{large dmap}=[draw,doubled,shape=NEbox,inner sep=2pt,minimum height=6mm,fill=white,minimum width=12mm]
\tikzstyle{large dmap conj}=[draw,doubled,shape=NWbox,inner sep=2pt,minimum height=6mm,fill=white,minimum width=12mm]
\tikzstyle{large dmap trans}=[draw,doubled,shape=SWbox,inner sep=2pt,minimum height=6mm,fill=white,minimum width=12mm]
\tikzstyle{large dmap adj}=[draw,doubled,shape=SEbox,inner sep=2pt,minimum height=6mm,fill=white,minimum width=12mm]
\tikzstyle{large dmap dag}=[draw,doubled,shape=SEbox,inner sep=2pt,minimum height=6mm,fill=white,minimum width=12mm]
\tikzstyle{very large dmap}=[draw,doubled,shape=NEbox,inner sep=2pt,minimum height=6mm,fill=white,minimum width=19.5mm]
\tikzstyle{muxbox}=[draw,shape=rectangle,minimum height=3mm,minimum width=3mm,fill=white]
\tikzstyle{dmuxbox}=[muxbox,doubled]
\tikzstyle{box}=[draw,shape=rectangle,inner sep=2pt,minimum height=6mm,minimum width=6mm,fill=white]
\tikzstyle{dbox}=[draw,doubled,shape=rectangle,inner sep=2pt,minimum height=6mm,minimum width=6mm,fill=white]
\tikzstyle{dmap}=[draw,doubled,shape=NEbox,inner sep=2pt,minimum height=6mm,fill=white]
\tikzstyle{dmapdag}=[draw,doubled,shape=SEbox,inner sep=2pt,minimum height=6mm,fill=white]
\tikzstyle{dmapadj}=[draw,doubled,shape=SEbox,inner sep=2pt,minimum height=6mm,fill=white]
\tikzstyle{dmaptrans}=[draw,doubled,shape=SWbox,inner sep=2pt,minimum height=6mm,fill=white]
\tikzstyle{dmapconj}=[draw,doubled,shape=NWbox,inner sep=2pt,minimum height=6mm,fill=white]
\tikzstyle{ddmap}=[draw,doubled,dashed,shape=NEbox,inner sep=2pt,minimum height=6mm,fill=white]
\tikzstyle{ddmapdag}=[draw,doubled,dashed,shape=SEbox,inner sep=2pt,minimum height=6mm,fill=white]
\tikzstyle{ddmapadj}=[draw,doubled,dashed,shape=SEbox,inner sep=2pt,minimum height=6mm,fill=white]
\tikzstyle{ddmaptrans}=[draw,doubled,dashed,shape=SWbox,inner sep=2pt,minimum height=6mm,fill=white]
\tikzstyle{ddmapconj}=[draw,doubled,dashed,shape=NWbox,inner sep=2pt,minimum height=6mm,fill=white]
\tikzstyle{smap}=[draw,shape=sNEbox,fill=white]
\tikzstyle{smapdag}=[draw,shape=sSEbox,fill=white]
\tikzstyle{smapadj}=[draw,shape=sSEbox,fill=white]
\tikzstyle{smaptrans}=[draw,shape=sSWbox,fill=white]
\tikzstyle{smapconj}=[draw,shape=sNWbox,fill=white]
\tikzstyle{dsmap}=[draw,dashed,shape=sNEbox,fill=white]
\tikzstyle{dsmapdag}=[draw,dashed,shape=sSEbox,fill=white]
\tikzstyle{dsmaptrans}=[draw,dashed,shape=sSWbox,fill=white]
\tikzstyle{dsmapconj}=[draw,dashed,shape=sNWbox,fill=white]
\tikzstyle{mmap}=[draw,shape=mNEbox]
\tikzstyle{mmapdag}=[draw,shape=mSEbox]
\tikzstyle{mmaptrans}=[draw,shape=mSWbox]
\tikzstyle{mmapconj}=[draw,shape=mNWbox]
\tikzstyle{mmapgray}=[draw,fill=gray!40!white,shape=mNEbox]
\tikzstyle{smapgray}=[draw,fill=gray!40!white,shape=sNEbox]
\pgfmathsetmacro{\pgf@shorten@left}{\pgfkeysvalueof{/tikz/shorten left}}
\pgfmathsetmacro{\pgf@shorten@right}{\pgfkeysvalueof{/tikz/shorten right}}
\pgfmathsetmacro{\pgf@shorten@left}{\pgfkeysvalueof{/tikz/shorten left}}
\pgfmathsetmacro{\pgf@shorten@right}{\pgfkeysvalueof{/tikz/shorten right}}
\tikzstyle{kpoint common}=[draw,fill=white,inner sep=1pt,minimum height=4mm]
\tikzstyle{kpoint}=[shape=cornerpoint,shorten left=5pt,kpoint common]
\tikzstyle{kpoint adjoint}=[shape=cornercopoint,shorten left=5pt,kpoint common]
\tikzstyle{kpoint conjugate}=[shape=cornerpoint,shorten right=5pt,kpoint common]
\tikzstyle{kpoint transpose}=[shape=cornercopoint,shorten right=5pt,kpoint common]
\tikzstyle{kpoint symm}=[shape=cornerpoint,shorten left=5pt,shorten right=5pt,kpoint common]
\tikzstyle{black kpoint}=[shape=cornerpoint,shorten left=5pt,kpoint common,fill=black,font=\color{white}]
\tikzstyle{black kpoint adjoint}=[shape=cornercopoint,shorten left=5pt,kpoint common,fill=black,font=\color{white}]
\tikzstyle{black kpointadj}=[shape=cornercopoint,shorten left=5pt,kpoint common,fill=black,font=\color{white}]
\tikzstyle{black dkpoint}=[shape=cornerpoint,shorten left=5pt,kpoint common,fill=black, doubled,font=\color{white}]
\tikzstyle{black dkpoint adjoint}=[shape=cornercopoint,shorten left=5pt,kpoint common,fill=black, doubled,font=\color{white}]
\tikzstyle{black dkpointadj}=[shape=cornercopoint,shorten left=5pt,kpoint common,fill=black, doubled,font=\color{white}]
\tikzstyle{kpointdag}=[kpoint adjoint]
\tikzstyle{kpointadj}=[kpoint adjoint]
\tikzstyle{kpointconj}=[kpoint conjugate]
\tikzstyle{kpointtrans}=[kpoint transpose]
\tikzstyle{big kpoint}=[kpoint, minimum width=1.2 cm, minimum height=8mm, inner sep=4pt, text depth=3mm]
\tikzstyle{wide kpoint}=[kpoint, minimum width=1 cm, inner sep=2pt]
\tikzstyle{wide kpointdag}=[kpointdag, minimum width=1 cm, inner sep=2pt]
\tikzstyle{wide kpointconj}=[kpointconj, minimum width=1 cm, inner sep=2pt]
\tikzstyle{wide kpointtrans}=[kpointtrans, minimum width=1 cm, inner sep=2pt]
\tikzstyle{gray kpoint}=[kpoint,fill=gray!50!white]
\tikzstyle{gray kpointdag}=[kpointdag,fill=gray!50!white]
\tikzstyle{gray kpointadj}=[kpointadj,fill=gray!50!white]
\tikzstyle{gray kpointconj}=[kpointconj,fill=gray!50!white]
\tikzstyle{gray kpointtrans}=[kpointtrans,fill=gray!50!white]
\tikzstyle{gray dkpoint}=[kpoint,fill=gray!50!white,doubled]
\tikzstyle{gray dkpointdag}=[kpointdag,fill=gray!50!white,doubled]
\tikzstyle{gray dkpointadj}=[kpointadj,fill=gray!50!white,doubled]
\tikzstyle{gray dkpointconj}=[kpointconj,fill=gray!50!white,doubled]
\tikzstyle{gray dkpointtrans}=[kpointtrans,fill=gray!50!white,doubled]
\tikzstyle{white label}=[draw,fill=white,rectangle,inner sep=0.7 mm]
\tikzstyle{gray label}=[draw,fill=gray!50!white,rectangle,inner sep=0.7 mm]
\tikzstyle{black label}=[draw,fill=black,rectangle,inner sep=0.7 mm]
\tikzstyle{dkpoint}=[kpoint,doubled]
\tikzstyle{wide dkpoint}=[wide kpoint,doubled]
\tikzstyle{dkpointdag}=[kpoint adjoint,doubled]
\tikzstyle{wide dkpointdag}=[wide kpointdag,doubled]
\tikzstyle{dkcopoint}=[kpoint adjoint,doubled]
\tikzstyle{dkpointadj}=[kpoint adjoint,doubled]
\tikzstyle{dkpointconj}=[kpoint conjugate,doubled]
\tikzstyle{dkpointtrans}=[kpoint transpose,doubled]
\tikzstyle{kscalar}=[kpoint common, shape=EBox, inner xsep=-1pt, inner ysep=3pt,font=\small]
\tikzstyle{kscalarconj}=[kpoint common, shape=WBox, inner xsep=-1pt, inner ysep=3pt,font=\small]
 \tikzstyle{upground}=[circuit ee IEC,thick,ground,rotate=90,scale=2.5]
 \tikzstyle{downground}=[circuit ee IEC,thick,ground,rotate=-90,scale=2.5]
 \tikzstyle{bigground}=[regular polygon,regular polygon sides=3,draw=gray,scale=0.50,inner sep=-0.5pt,minimum width=10mm,fill=gray]
\tikzstyle{arrs}=[-latex,font=\small,auto]
\tikzstyle{arrow plain}=[arrs]
\tikzstyle{arrow dashed}=[dashed,arrs]
\tikzstyle{arrow bold}=[very thick,arrs]
\tikzstyle{arrow hide}=[draw=white!0,-]
\tikzstyle{arrow reverse}=[latex-]
\tikzstyle{cdnode}=[]
\newcommand{\smalldotonly}[1]{%
\,\begin{tikzpicture}[dotpic,yshift=-0.15mm]
\node [#1] (a) at (0,0) {};
\end{tikzpicture}\,}
\newcommand{\smallblackdot}{\smalldotonly{smalldot}\xspace}
\newcommand{\smallwhitedot}{\smalldotonly{small white dot}\xspace}
\newcommand{\smallgraydot}{\smalldotonly{small gray dot}\xspace}
\definecolor{hexcolor0xa9a9a9}{rgb}{0.663,0.663,0.663}
\tikzstyle{GrayLine}=[dashed,draw=hexcolor0xa9a9a9]
\tikzstyle{gray}=[dashed,draw=hexcolor0xa9a9a9]
\theoremstyle{definition}
\newtheorem{theorem}{Theorem}[section]
\newtheorem*{theorem*}{Theorem}
\newtheorem{corollary}[theorem]{Corollary}
\newtheorem{prop}[theorem]{Proposition}
\newtheorem{defn}[theorem]{Definition}
\newtheorem{example*}[theorem]{Example*}
\newtheorem{examples*}[theorem]{Examples*}
\newtheorem{remark*}[theorem]{Remark*}
\DeclareMathOperator{\Tr}{Tr}
\begin{document}
\maketitle

\begin{abstract}
Density operators are one of the key ingredients of quantum theory. They can be constructed in two ways: via a convex sum of `doubled kets' (i.e.~mixing), and by tracing out part of a  `doubled' two-system ket (i.e.~dilation). Both constructions can be iterated, yielding new mathematical species that have already found applications outside physics. However, as we show in this paper, the iterated constructions no longer yield the same mathematical species. Hence, the constructions `mixing' and `dilation' themselves are by no means equivalent. Concretely, when applying the Choi-Jamiolkowski isomorphism to the second iteration, dilation produces arbitrary symmetric bipartite states, while mixing only yields the disentangled ones. All results are proven using diagrams, and hence they hold not only for quantum theory, but also for a much more general class of process theories.  This paper is the shorter version of the ArXiv paper \cite{Zwart2017}, all missing proofs can be found in the full version.
\end{abstract}

\section{Introduction}

In 1932, von Neumann introduced special operators, now called density operators, to describe statistical \em mixtures \em of quantum states \cite{VonNeumann1955}. Unlike classical probability distributions, density operators are able to describe mixtures that involve a superposition of states, making them suitable for quantum theory. Von Neumann noticed that these density operators also arise when part of a state describing a composite system is discarded, a.k.a.~\em dilation\em. So two conceptually different physical processes happen to yield the same mathematical species in the quantum formalism. In other words, density operators are two-faced.

Mathematically, the fact that these two faces are distinct shows in the corresponding constructions. Following \cite{Ashoush}, density operators representing statistical mixtures are constructed by first matching each vector (ket) in the mixture with its corresponding functional (bra), turning the vectors into operators. We call this \em doubling\em. Then, these operators are combined in a convex sum, forming the density operator:
\begin{align*}
  \left\{|\phi_i\rangle\right\}_i & \mapsto \left\{|\phi_i\rangle\langle\phi_i|\right\}_i \\
   & \mapsto
   \sum_i \ p_i\  |\phi_i\rangle\langle\phi_i|
\end{align*}
where all $p_i\geq 0$ and  $\sum_i p_i =1$.

Density operators originating from dilation are also constructed by doubling a vector: this time a vector in a space of form $A \otimes B$. Then, part of the resulting operator is traced out, yielding again a density operator:
\begin{align*}
  |\psi_{AB}\rangle & \mapsto  |\psi_{AB}\rangle\langle\psi_{AB}| \\
 & \mapsto Tr_B \left( |\psi_{AB}\rangle\langle\psi_{AB}| \right),
\end{align*}
where $|\psi_{AB}\rangle$ is a vector in Hilbert space $A \otimes B$. As both constructions yield density operators, one is tempted to think of these as equivalent, which indeed many physicists do.

Iterating these constructions yields new mathematical species that were called \emph{dual density operators} in \cite{Ashoush2015, Ashoush}. However, it turns out that the iterated versions of the constructions are no longer equivalent. In fact, as we will prove in sections \ref{section:notthesame} and \ref{section:subspace}, the dual density operators resulting from double mixing form a proper subspace of those resulting from double dilation. Note that we use the term `double' mixing/dilation to emphasise that this procedure involves another round of doubling, hence distinguishing from either (i) mixing further already mixed states, and (ii) discarding a second system after discarding a first, both of which of course still yield ordinary density operators.

Of course, the constructions can be iterated once more, and even more after that. Each iteration yields new mathematical species, and from the second iteration onwards, mixtures always form a proper
subspace of what is obtained by dilation. We illustrate the results of iterating both constructions any finite number of times in section \ref{section:generalisation}, expanding the work in \cite{Ashoush}, which already considered the general case for dilation but not for mixing.

A closer examination of the differences between the results of double mixing and double dilation reveals that double mixing always results in symmetric disentangled states, whereas doubly dilated states are highly symmetrical, but not necessarily disentangled. This difference hints at a possible classification of states that are either doubly mixed or doubly dilated, which would contribute to the characterisation of entangled states started by Horodecki\cite{Horodecki2007}. We make a start of this enterprise in sections \ref{section:characterisation_twicemix} and \ref{section:characterisation_twicedilation}.

In quantum theory, density operators provide enough structure to describe the currently known phenomena, so for physics there seems to be no direct use for the iterated constructions. The only  notable exception known to us is the study of the space of iterated dilated states as a generalised probabilistic theory  by Barnum and Barrett \cite{Barnum-Barrett}. However, recent developments in \em natural language processing \em (NLP) have found an interesting application for these generalised variants of density operators \cite{Ashoush2015, Ashoush}. Density operators first appeared in the NLP literature in \cite{blacoe2013quantum}, where they are mainly used to enlarge the parameter space. A conceptual grounding matching that of quantum theory is given in the work of Piedeleu, Kartsaklis et al.~\cite{Piedeleu2015}, where they are used as a model for ambiguous words, representing these words as a statistical mixture over their possible `pure'  meanings. Here, the overall setting was that of categorical compositional distributional (DisCoCat) models of meaning of Coecke, Sadrzadeh and Clark \cite{Coeckea}, which was itself also strongly inspired by quantum theory \cite{teleling}.

A second application of density operators in NLP is found in the work of Balkir, Bankova et al.~\cite{Balkir2016, bankova2016graded, Balkr}, where lexical entailment is modelled by exploiting the fact that density operators can be partially ordered \cite{CoeckeMartin, Weteringen}. Naturally, this brought the need for a model that could accommodate ambiguity and lexical entailment simultaneously. It was to this end that Ashoush and Coecke started iterating the constructions of density operators. The results of this paper will hopefully contribute to further developing such models for NLP.

The structure of this paper is as follows. After a brief explanation of the graphical notation used in this paper, we recap the two constructions of density operators as described in \cite{Ashoush2015}, using both traditional Dirac bra-ket notation and diagrams. Then, in section \ref{section:notthesame}, we show that the results of iterating these constructions twice are no longer identical. The results are however strongly related: one being a subspace of the other, which we prove in section \ref{section:subspace}. Next, we characterise what results from double mixing as a special class of disentangled states in section \ref{section:characterisation_twicemix}, and reveal the symmetries caused by double dilation in section \ref{section:characterisation_twicedilation}. Finally, in section \ref{section:generalisation} we analyse the general case of applying both constructions any finite number of times, which emphasises the difference between them.

\subsection{Graphical notation}\label{section:diagram_notation}

Our proofs use a diagrammatic language designed for categorical quantum mechanics \cite{Kindergarten, Selinger2007, CPV, CPaqPav}, building further on Penrose's notation \cite{Penrose1971}.  We use this graphical notation because it greatly simplifies otherwise tedious proofs, abstracting away from unimportant details. It also has the advantage that the results are true in a more general setting than just Hilbert spaces. For the reader unfamiliar with this graphical notation we have included a brief summary of the main components that feature in this paper in the ArXiv version of this paper \cite{Zwart2017}. For an extensive introduction we refer to the textbook \cite{CoeckeBOOK} or the shorter  paper version \cite{Coecke2015, Coecke2016} which is also self-contained.

\section{Double mixing and double dilation}\label{section:constuction}

We recap both mixing and dilation, and show how these constructions can be iterated.

\subsection{Mixing}\label{section:mixing}
Given a set of $n$ normalised vectors $|\phi_i\rangle$ in a finite-dimensional Hilbert space $A$, and a probability distribution $\{p_i\}^{i\leq n}$, we form the density operator representing the mixture of these vectors as follows:
\begin{align*}
  (\{ |\phi_i\rangle\},\{ p_i\}) \mapsto \rho = & \sum_{i=1}^{n} p_i |\phi_i\rangle \langle \phi_i|
\end{align*}
In the diagrammatic language:
\[
\left(\left\{ \scalebox{0.6}{\tikzfig{phi_i_state}} \right\},\left\{ p_i \right\}\right)\;\; \mapsto \;\;\scalebox{0.6}{\tikzfig{rho_process}} \;\; = \;\; \scalebox{0.6}{\tikzfig{resultofmixing_process}}
\]
Alternatively, we could express $\rho$ as a vector in $A\otimes\overline{A}$. The benefit of obtaining a vector rather than an operator is that we get a construction that can be iterated, since it sends vectors to vectors:
\begin{align}\label{construction:sum}
(\{ |\phi_i\rangle\},\{ p_i\}) \mapsto ||\rho\rangle\rangle = & \sum_{i=1}^{n} p_i |\phi_i\rangle\overline{|\phi_i\rangle}
\end{align}
Here $\overline{|\phi_i\rangle}$ is the conjugate of $|\phi_i\rangle$, $|\phi_i\rangle\overline{|\phi_i\rangle}$ is shorthand for $|\phi_i\rangle \otimes \overline{|\phi_i\rangle}$ and notation $||\cdot\rangle\rangle$ is used to remind us that $\rho$ is a vector in Hilbert space $\hat{A} = A\otimes \overline{A}$ instead of $A$.

Diagrammatically, this construction translates as:
\[
\left(\left\{ \scalebox{0.6}{\tikzfig{phi_i_state}} \right\},\left\{ p_i \right\}\right)\;\; \mapsto \;\;\scalebox{0.6}{\tikzfig{rho_state}} \;\; = \;\; \scalebox{0.6}{\tikzfig{resultofmixing}}
\]

\newpage
A second iteration of (\ref{construction:sum}) with $m$ density vectors $||\rho_k\rangle\rangle = \sum_{i=1}^{n_k} p_{ki} |\phi_{ki}\rangle \overline{|\phi_{ki}\rangle}$ and a probability distribution $\{r_k\}$ yields:
\begin{align}
  |||\Psi\rangle\rangle\rangle = & \sum_{k=1}^{m} r_k ||\rho_k\rangle\rangle \overline{||\rho_k\rangle\rangle} \nonumber
 \\
  = & \sum_{k=1}^{m} r_k \left( \sum_{i=1}^{n_k} p_{ki} |\phi_{ki}\rangle \overline{|\phi_{ki}\rangle}\right) \left( \overline{\sum_{j=1}^{n_k} p_{kj} |\phi_{kj}\rangle \overline{|\phi_{kj}\rangle}}\right)\nonumber \\
  = & \sum_{k=1}^{m}\sum_{i=1}^{n_k}\sum_{j=1}^{n_k} r_k p_{ki} p_{kj} |\phi_{ki}\rangle \overline{|\phi_{ki}\rangle}|\phi_{kj}\rangle \overline{|\phi_{kj}\rangle}\label{result:doublemix}
\end{align}
Where $|||\Psi\rangle\rangle\rangle$ is a vector in $A\otimes\overline{A}\otimes A\otimes\overline{A} = \hat{\hat{A}}$. The accompanying diagram is (only showing the result, the dotted lines indicate the vectors $||\rho_k\rangle\rangle$ and $\overline{||\rho_k\rangle\rangle}$):
\[
\scalebox{0.6}{\tikzfig{Psi_state}} \;\; = \;\;\scalebox{0.6}{\tikzfig{resultofmixing_twice}}
\]

To make the diagram look prettier and to make it easier to compare to later results, we can hide the summations over $i$ and $j$ inside caps (wires $B$ in the diagram below) and summation over $k$ inside a four-legged spider (wire $C$)
. The individual $\phi_{ki}$ will be no longer visible; they are absorbed into a general $\phi$:
\[
\scalebox{0.6}{\tikzfig{Psi_state}} \;\; = \;\;\scalebox{0.6}{\tikzfig{doublemixstate_phi}}
\]
We call a vector resulting from twice applying construction (\ref{construction:sum}) \emph{doubly mixed}.

\subsection{Dilation}

On the other hand, if we have a vector $|\phi_{AB}\rangle$ in space $A\otimes B$, we can form the operator $|\phi_{AB}\rangle\langle\phi_{AB} |$ and trace out $B$:
\begin{align*}
 |\phi_{AB}\rangle \mapsto \rho' = & \Tr_B |\phi_{AB}\rangle\langle\phi_{AB}|
\end{align*}
Or as a diagram:
\[
\scalebox{0.6}{\tikzfig{phi_AB_state}} \;\; \mapsto\;\; \scalebox{0.6}{\tikzfig{rho_prime_process}}\;\; = \;\;\scalebox{0.6}{\tikzfig{Trace_B_Phi_AB}}
\]
If we would rather have a vector, this construction becomes, for some orthonormal basis $\{|e^B_i\rangle\}$ of $B$:
\begin{align}\label{construction:reduce}
 |\phi_{AB}\rangle \mapsto ||\rho'\rangle\rangle = &  \sum_{i=0}^{\dim(B)-1}\langle e^B_i |\phi_{AB}\rangle \overline{\langle e^B_i |\phi_{AB}\rangle}
\end{align}

The corresponding diagram is:
\[
\scalebox{0.6}{\tikzfig{phi_AB_state}} \;\; \mapsto\;\; \scalebox{0.6}{\tikzfig{rho_prime_state}}\;\; = \;\;\scalebox{0.6}{\tikzfig{reducedstate}}
\]

To iterate (\ref{construction:reduce}), we need the Hilbert space $A$ to be of form $A \otimes C$, so that after reducing a second time, the result is still a vector instead of a number. So suppose that our original vector was $|\phi_{ABC}\rangle$. Applying (\ref{construction:reduce}) using space $B$ then yields $||\rho'_{\hat{A}\hat{C}}\rangle\rangle$ in $A \otimes C \otimes \overline{C} \otimes \overline{A}$. Applying (\ref{construction:reduce}) again, now using space $\hat{C} = C \otimes \overline{C}$ results in:

\begin{align}
  |||\Psi'\rangle\rangle\rangle = & \sum_{k,l}\langle\langle e^{\hat{C}}_{k,l} ||\rho'_{\hat{A}\hat{C}}\rangle\rangle \overline{\langle\langle e^{\hat{C}}_{k,l} ||\rho'_{\hat{A}\hat{C}}\rangle\rangle} \nonumber \\
  = &  \sum_{k}\sum_{l}\left( \sum_{i}\langle e^C_k| \langle e^B_i|\phi_{ABC}\rangle \overline{\langle e^C_l| \langle e^B_i|\phi_{ABC}\rangle} \right) \left(\overline{\sum_{j}\langle e^C_k| \langle e^B_j|\phi_{ABC}\rangle \overline{\langle e^C_l| \langle e^B_j|\phi_{ABC}\rangle}} \right) \nonumber \\
  = & \sum_{k}\sum_{l}\sum_{i}\sum_{j}\langle e^C_k| \langle e^B_i|\phi_{ABC}\rangle \overline{\langle e^C_l| \langle e^B_i|\phi_{ABC}\rangle} \langle e^C_l| \langle e^B_j|\phi_{ABC}\rangle \overline{\langle e^C_k| \langle e^B_j|\phi_{ABC}\rangle} \label{result:doublereduce}
\end{align}
The corresponding diagram is:
\[
\scalebox{0.6}{\tikzfig{Psi_state}} \;\; = \;\;\scalebox{0.6}{\tikzfig{dualdensitymatrix1_phi_ABC}}
\]
We call such vectors \emph{doubly dilated}.

\section{Counterexample to equivalence}\label{section:notthesame}

Comparing expressions (\ref{construction:sum}) and (\ref{result:doublemix}) to (\ref{construction:reduce}) and (\ref{result:doublereduce}), it is not at all obvious that mixing and dilation could be equivalent; indeed, they are not! Although it is possible to prove that both constructions give the same results when applied just a single time, this is no longer true when the constructions are iterated. As a counterexample, we give a vector resulting from double dilation that cannot be the result of double mixing.

\begin{theorem}\label{theorem_strictsubspace}
There exist vectors resulting from double dilation that cannot be written as vectors resulting from double mixing.
\end{theorem}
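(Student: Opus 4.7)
My plan is to apply the Choi--Jamiolkowski correspondence on each of the two doubled pairs of wires, translating the question into one about operators on $A \otimes A$, and then to obtain a contradiction via positivity. Under the correspondence $|\phi\rangle\overline{|\phi\rangle} \leftrightarrow |\phi\rangle\langle\phi|$ applied in parallel to the wire pairs $(1,2)$ and $(3,4)$, formula~(\ref{result:doublemix}) sends every doubly mixed vector to an operator of the form
\begin{equation*}
\sum_k r_k\, \rho_k \otimes \rho_k, \qquad \rho_k \;=\; \sum_i p_{ki}\,|\phi_{ki}\rangle\langle\phi_{ki}|,
\end{equation*}
which is positive semi-definite because $r_k \geq 0$ and each $\rho_k \otimes \rho_k$ is positive. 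The strategy is therefore to exhibit a single doubly dilated vector whose image under the same correspondence fails to be positive semi-definite.

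For the counterexample I would take $A = C = \mathbb{C}^d$ with $d \geq 2$, let $B$ be trivial (or absorb a fixed basis vector of $B$ into $|\phi_{ABC}\rangle$ if one insists $B$ be non-trivial), and choose $|\phi_{AC}\rangle = \tfrac{1}{\sqrt{d}}\sum_m |mm\rangle_{AC}$, the maximally entangled state. Using $\langle e^C_k|\phi_{AC}\rangle = \tfrac{1}{\sqrt{d}}|k\rangle_A$, the sums over the $B$-indices in formula~(\ref{result:doublereduce}) collapse to a single term and I expect to obtain
\begin{equation*}
|||\Psi'\rangle\rangle\rangle \;=\; \frac{1}{d^2}\sum_{k,l}\, |k\rangle_A\,\overline{|l\rangle}_{\bar{A}}\,|l\rangle_A\,\overline{|k\rangle}_{\bar{A}},
\end{equation*}
a vector in $\hat{\hat{A}} = A \otimes \bar{A} \otimes A \otimes \bar{A}$. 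Applying Choi--Jamiolkowski to the two pairs $(1,2)$ and $(3,4)$ then sends this to $\tfrac{1}{d^2}\sum_{k,l}|k\rangle\langle l|\otimes|l\rangle\langle k|$ on $A\otimes A$, which by the standard identity is $\tfrac{1}{d^2}$ times the \textsc{swap} operator.

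Finally, I would invoke the fact that \textsc{swap} has eigenvalue $-1$ on the antisymmetric subspace of $A \otimes A$, which is non-trivial as soon as $d \geq 2$, so \textsc{swap} is not positive semi-definite, and neither is any non-zero positive multiple of it. Combined with the first paragraph, this rules out any presentation of $|||\Psi'\rangle\rangle\rangle$ as a doubly mixed vector and proves the theorem. The main subtlety I foresee is keeping the conjugation conventions and the ordering of the tensor factors of $\hat{\hat{A}}$ consistent across the two parallel Choi--Jamiolkowski identifications, as well as making sure that allowing an arbitrary non-negative normalisation does not open up a loophole; once that bookkeeping is settled the positivity obstruction is immediate.
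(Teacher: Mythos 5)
Your proof is correct, and it takes a genuinely different route from the paper's. Both proofs use essentially the same counterexample (a doubly dilated vector built from a maximally entangled state; the paper presents its Choi--Jamiolkowski image as a cap composed with a cup, which under your pairing of factors $(1,2)$ and $(3,4)$ becomes $d^{-2}$ times the swap operator), but the obstruction you exhibit is different. The paper argues by disconnectedness: if the identity had the doubly-mixed form, Proposition 6.78 of \cite{CoeckeBOOK} would force it to $\circ$-separate, which is absurd. You instead fix the linear isomorphism $A\otimes\overline{A}\otimes A\otimes\overline{A}\cong\mathrm{End}(A\otimes A)$ that turns each adjacent ket/conjugate-ket pair into an operator, observe that it sends every doubly mixed vector to the manifestly positive (indeed separable) operator $\sum_k r_k\,\rho_k\otimes\rho_k$ --- this is exactly the paper's ``operator$_2$'' form --- and sends your counterexample to $d^{-2}$ times swap, which has eigenvalue $-d^{-2}$ on the antisymmetric subspace. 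Since the identification is a single well-defined injective linear map, and since no nonnegative rescaling can repair a failure of positive semi-definiteness, the conclusion follows. The bookkeeping worries you flag are benign: the pairing acts on adjacent blocks, so it is just a tensor product of two copies of the standard isomorphism $A\otimes\overline{A}\cong\mathrm{End}(A)$, and taking $B$ trivial (or tensoring with a fixed basis vector of $B$) is a legitimate instance of the construction. The trade-off between the two arguments is worth noting: yours is shorter and purely spectral, needing only linear algebra, but it relies on positivity and eigenvalues and so is specific to Hilbert-space models; the paper's diagrammatic disconnection argument is longer but survives in arbitrary dagger compact categories with spiders, which is precisely the level of generality the paper advertises. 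One further remark: your computation shows that the interleaved pairing $(1,2),(3,4)$ is \emph{not} positive on all doubly dilated vectors, so it cannot be one of the two ``CJ-density operators'' of Section 6 (those are obtained from the two nested cuts of the rainbow and are positive by construction); this is a feature of your argument, not a bug, since a pairing that is positive on mixtures but not on your counterexample is exactly what the separation requires.
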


The following doubly dilated vector provides a counterexample. 
\[
  \scalebox{0.6}{\tikzfig{counterexample_state}}
\]

\begin{proof}
Consider the following diagram (left), which is Choi-Jamiolkowski isomorphic to the counterexample mentioned above. We will show that it cannot be formed with either of the two diagrams on the right, which are the only two possible operators resulting from applying the Choi-Jamiolkowski isomorphism to a vector resulting from double mixing (see also equations \ref{dm1} and \ref{dm2} in section \ref{section:characterisation_twicedilation} below).
\[
\scalebox{1.0}{\tikzfig{notadoublemixture1}} \;\;\;\notin\;\;\; \left\{ \scalebox{0.6}{\tikzfig{doublemixmorphism2}}\;\;,\;\; \scalebox{0.6}{\tikzfig{doublemixmorphism2otherway}}  \right\}
\]
First, we prove that the identity morphism cannot be written as a diagram of the form of the rightmost diagram above. To show this, suppose that the identity \emph{can} be written in that form:
\begin{align*}
  Id_{C_1 \otimes \overline{C_1}}\;\; &= \;\; \scalebox{0.6}{\tikzfig{identity2lines}} \;\; = \;\; \scalebox{0.6}{\tikzfig{doublemixmorphism2otherway}} \;\; = \;\;
  \scalebox{0.6}{\tikzfig{doublemixmorphism2otherway_dubbelgevouwen_spider}}\;\; = \;\; \scalebox{0.6}{\tikzfig{doublemixmorphism2otherway_dubbelgevouwen}}
\end{align*}
Here, the third equality is just rewriting the wires and boxes using thick lines\footnote{for an explanation about thick line notation and spiders, see section 2 of the ArXiv version of this paper \cite{Zwart2017}}. By doing this, the classical spider turns into a bastard spider. In the last diagram, the bastard spider used fission to turn into a quantum spider with a wire that is discarded.

We can now use the following theorem:
\begin{quotation}
``If a reduced operator $\Psi_{reduced}$ (an operator with one of its outputs discarded) is pure (can be written as a tensor product of some operator and its conjugate: $\Psi_{reduced} = \Phi \otimes \overline{\Phi}$), then the original operator can be written as a tensor product of that pure operator and a (possibly impure) vector: $\Psi = \Phi \otimes \overline{\Phi} \otimes ||\rho\rangle\rangle$'' \cite[Proposition 6.78]{CoeckeBOOK}.
\end{quotation}
Applying this theorem to the rightmost diagram above gives:
\[
\scalebox{0.6}{\tikzfig{doublemixmorphism2otherway_dubbelgevouwen_2}}\;\; = \;\; \scalebox{0.6}{\tikzfig{identity_and_state}}
\]
For some vector $\rho$. As $\rho$ is non-zero, there exists a basis vector $\langle e_i |$ such that the following is nonzero:
\[
\scalebox{0.6}{\tikzfig{rho_effect_i}}
\]
Combining this with the above:
\begin{align*}
  \scalebox{0.6}{\tikzfig{identity_and_state_with_i}} \;\; & = \;\;\scalebox{0.6}{\tikzfig{doublemixmorphism2otherway_dubbelgevouwen_with_i}} \;\;  = \;\; \scalebox{0.6}{\tikzfig{doublemixmorphism2otherway_dubbelgevouwen_disconnect}}
\end{align*}
And hence:
\[
\scalebox{0.6}{\tikzfig{identity2lines}} \;\; \approx  \;\; \scalebox{0.6}{\tikzfig{doublemixmorphism2otherway_dubbelgevouwen_disconnect}}
\]
In other words, the identity $\circ$-separates, which is non-sense \cite{CoeckeBOOK}.
Therefore, the identity cannot be of form:
\[
\scalebox{0.6}{\tikzfig{identity2lines}} \;\; \neq  \;\;\scalebox{0.6}{\tikzfig{doublemixmorphism2otherway}}
\]
With some wire-bending (using the Choi-Jamiolkowski isomorphism), we then see that the diagram composed of a cap followed by a cup cannot be of form:
\[
\scalebox{1.0}{\tikzfig{capcomposedwithcup}}\;\; \neq \;\;\scalebox{0.6}{\tikzfig{doublemixmorphism2}}
\]
And so:
\[
\scalebox{1.0}{\tikzfig{notadoublemixture1}} \;\;\notin \;\;\left\{ \scalebox{0.6}{\tikzfig{doublemixmorphism2}}\;\;,\;\; \scalebox{0.6}{\tikzfig{doublemixmorphism2otherway}}  \right\}
\]
This proves that the doubly dilated vector shown at the beginning of this proof cannot be the result of double mixing. Therefore, it provides the counterexample we needed to show that double dilation and double mixing are indeed non-equivalent constructions.
\end{proof}

\section{Double mixing $\subseteq$ double dilation}\label{section:subspace}

In the previous section, we gave an example of a vector resulting from double dilation that could not result from double mixing. The converse, however, does hold: every vector resulting from double mixing can be obtained from double dilation. In other words, double mixing yields a proper subspace of double dilation. The proof is a beautiful example of diagrammatic reasoning with spiders. 

\begin{theorem}\label{theorem_subspace}
Every vector resulting from double mixing also results from double dilation.
\end{theorem}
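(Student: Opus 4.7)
My plan is constructive: given an arbitrary doubly mixed state, I will exhibit an explicit tripartite vector $|\phi_{ABC}\rangle$ whose double dilation produces exactly that state. Comparing the formulas (\ref{result:doublemix}) and (\ref{result:doublereduce}) makes the strategy transparent. Both are sums of tensor products of four (conjugated or unconjugated) copies of base vectors, and they differ only in how the outer classical index is threaded through the copies. In (\ref{result:doublemix}) the 4-legged classical spider forces a single index $k$ to be shared by all four copies, whereas in (\ref{result:doublereduce}) the two $C$-caps only enforce $k_1 = k_4$ and $k_2 = k_3$, leaving the equality $k_1 = k_2$ (equivalently $k_3 = k_4$) unaccounted for. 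The idea is to use the $B$-caps to supply this missing identification by enlarging $B$ with an auxiliary wire that carries a copy of the classical index.

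Concretely, I would take $B := B_1 \otimes B_2$, where $B_1$ has a basis indexed by the inner-mixture label $i$ and $B_2$ is a second copy of $C$ with basis indexed by the outer-mixture label $k$, and define
\[
|\phi_{A B C}\rangle \;:=\; \sum_{k,i} r_k^{1/4}\sqrt{p_{ki}}\; |\phi_{ki}\rangle_A \otimes |e^{B_1}_i\rangle \otimes |e^{B_2}_k\rangle \otimes |e^{C}_k\rangle.
\]
Substituting this into (\ref{result:doublereduce}), the $B_2$-component of the four inner products produces Kronecker deltas that force $k = l$ and collapse the corresponding sums; the $B_1$-component then plays precisely the role of the two $B$-caps in the doubly mixed diagram, summing over the independent inner-mixture labels $i$ and $j$; and the four scalar prefactors multiply to $(r_k^{1/4})^{4}\cdot(\sqrt{p_{ki}})^{2}\cdot(\sqrt{p_{kj}})^{2} = r_k\, p_{ki}\, p_{kj}$, reproducing (\ref{result:doublemix}) exactly.

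Diagrammatically, this argument just expresses the fact that the 4-legged classical spider on $C$ can, by spider fusion, be rewritten as two 2-legged spiders (caps) on $C$ together with a correlated pair of basis vectors on an auxiliary wire that enforces the extra identification; absorbing those basis vectors and the classical probability weights into the definition of $\phi_{ABC}$ transforms the doubly mixed diagram into a doubly dilated one. The main subtlety is the bookkeeping of the scalar weights: the exponents $1/4$ on $r_k$ and $1/2$ on $p_{ki}$ are forced by the requirement that the four vertices of the dilation diagram combine to give the correct probability weights, and one must check that enlarging $B$ to $B_1 \otimes B_2$ introduces no spurious cross-terms, which is guaranteed by the orthonormality of the auxiliary basis $\{|e^{B_2}_k\rangle\}$.
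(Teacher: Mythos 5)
Your proof is correct and is essentially the paper's own argument rendered in coordinates: the paper splits the four-legged classical spider by spider fission and absorbs the resulting spiders into the four boxes, which is exactly what your explicit witness $|\phi_{ABC}\rangle$ with the auxiliary wire $B_2$ and the weights $r_k^{1/4}\sqrt{p_{ki}}$ accomplishes. The Kronecker-delta bookkeeping forcing $k=l$ and the scalar exponents both check out, so there is no gap.
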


\begin{proof}
Given a vector resulting from double mixing, we can use the spider fission 
to make four spiders:
\[
\scalebox{0.6}{\tikzfig{doublemixstate}} \;\; \rightarrow \;\;
\scalebox{0.6}{\tikzfig{doublemixstate_2}}
\]
Moving these spiders closer to the four boxes gives a familiar picture. Absorbing the spiders into the boxes yields the vector resulting from double mixing as a vector resulting from double dilation:
\[
\rightarrow \;\; \scalebox{0.6}{\tikzfig{doublemixstate_3}} \;\; \rightarrow \scalebox{0.6}{\tikzfig{doublemixstate_4}}
\]
\end{proof}

By Theorems \ref{theorem_strictsubspace} and \ref{theorem_subspace} we then obtain:

\begin{corollary}
Double mixing yields a proper subspace of double dilation.
\end{corollary}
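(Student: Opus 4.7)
The plan is to deduce the corollary directly from the two preceding theorems, with essentially no extra work. Let $\mathcal{M} \subseteq \hat{\hat{A}}$ denote the set of vectors arising from two iterations of construction (\ref{construction:sum}) (double mixing), and let $\mathcal{D} \subseteq \hat{\hat{A}}$ denote the set of vectors arising from two iterations of construction (\ref{construction:reduce}) applied to a vector in a sufficiently large ambient Hilbert space (double dilation). Theorem \ref{theorem_subspace} is exactly the statement $\mathcal{M} \subseteq \mathcal{D}$: its diagrammatic argument rewrites any doubly mixed vector as a doubly dilated one by inserting spiders and absorbing them into new boxes. Theorem \ref{theorem_strictsubspace} exhibits an explicit witness in $\mathcal{D} \setminus \mathcal{M}$, namely the doubly dilated vector built from a cap-cup pair, whose Choi--Jamio\l{}kowski image was shown not to lie in the set of doubly mixed operators. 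Combining the two gives $\mathcal{M} \subsetneq \mathcal{D}$, which is the corollary.

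Because the deduction is a one-line logical combination, there is no technical obstacle to clear. The only point I would take care with in the write-up is that the word ``subspace'' here should be read informally, as ``subset of the ambient space of vectors in $\hat{\hat{A}}$,'' rather than as a linear subspace in the algebraic sense: nothing in Theorems \ref{theorem_strictsubspace} or \ref{theorem_subspace} asserts closure under addition or scalar multiplication. I would therefore phrase the proof as pure set inclusion, so that the statement ``$\mathcal{M}$ is a proper subset of $\mathcal{D}$'' is unambiguous, and then conclude with the simple assembly: by Theorem \ref{theorem_subspace} we have $\mathcal{M} \subseteq \mathcal{D}$, and by Theorem \ref{theorem_strictsubspace} there exists $|||\Psi'\rangle\rangle\rangle \in \mathcal{D} \setminus \mathcal{M}$, hence $\mathcal{M} \subsetneq \mathcal{D}$.
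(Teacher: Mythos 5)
Your proof is correct and is exactly the paper's argument: the corollary is stated there as an immediate consequence of Theorems \ref{theorem_strictsubspace} and \ref{theorem_subspace}, with no further work. Your remark that ``subspace'' should be read as set inclusion rather than linear closure is a sensible clarification that the paper leaves implicit.
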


\section{Double mixing $\simeq$ disentangled states}\label{section:characterisation_twicemix}

\begin{defn}
Following \cite{CoeckeBOOK}, \em disentangled \em bipartite states are those states with diagrams of form:
\[
\scalebox{0.6}{\tikzfig{rho_bipartite_state}} \;\; = \;\; \scalebox{0.6}{\tikzfig{general_disentangled_state}}
\]
\em Entangled \em states are those that are not disentangled.
\end{defn}
The intuition behind this idea is that disentangled states can share only classical information (the thin wire connecting the left and right halves of the diagram).

\begin{prop}
Vectors resulting from double mixing correspond to conjugate-symmetric
disentangled states.
\end{prop}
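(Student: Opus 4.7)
The plan is to prove both directions of the correspondence via diagrammatic reasoning, making essential use of the explicit diagram for a doubly mixed vector derived in Section~2.1 together with the Choi--Jamiolkowski (CJ) isomorphism already deployed in Section~3.

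For the forward direction, I would start from the pictorial form of a doubly mixed vector, namely the four-box diagram with a four-legged classical spider (carrying the outer index $k$) and two caps (carrying the inner indices $i$ and $j$). Bending two of its four outputs downwards reinterprets it as a bipartite state on $\hat{A}\otimes\hat{A}$. Under this rebracketing, each of the two caps becomes an independent ``doubled'' subdiagram of the form $\sum_{i}p_{ki}|\phi_{ki}\rangle\overline{|\phi_{ki}\rangle}$, and the \emph{only} connection between the two halves is the classical four-legged spider on $k$. This is precisely the diagrammatic signature of a disentangled bipartite state, so disentanglement is immediate. Conjugate-symmetry then follows because each half, viewed as an operator via CJ, is a standard density operator whose self-adjointness translates directly into the diagrammatic symmetry exchanging a subdiagram with its conjugate.

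For the converse, I would begin with an arbitrary conjugate-symmetric disentangled bipartite state on $\hat{A}\otimes\hat{A}$. Disentanglement provides a decomposition of the form $\sum_{k}r_{k}\,\sigma_{k}^{(1)}\otimes\sigma_{k}^{(2)}$ connected only by a classical wire, where each $\sigma_{k}^{(i)}$ is a pure vector in $A\otimes\overline{A}$. The conjugate-symmetry hypothesis forces $\sigma_{k}^{(2)}$ to be the conjugate of $\sigma_{k}^{(1)}$, so I may write $\sigma_{k}\otimes\overline{\sigma_{k}}$. The remaining task is to recognise each $\sigma_{k}$ as the doubled form of a genuine density operator: once this is done, a spectral-style decomposition $\sigma_{k}=\sum_{i}p_{ki}|\phi_{ki}\rangle\overline{|\phi_{ki}\rangle}$ reassembles the state into exactly the doubly mixed template. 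Undoing the CJ wire-bending then exhibits the original state as a doubly mixed vector.

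The main obstacle will be the last step of the converse: justifying diagrammatically that the pure ``halves'' of a conjugate-symmetric disentangled state are doubled density operators, rather than arbitrary vectors in $A\otimes\overline{A}$. My plan is to argue that conjugate-symmetry on the bipartite state forces each $\sigma_{k}$ to be \emph{self-conjugate} as a subdiagram, which diagrammatically characterises the image of the doubling map, and then invoke the standard pure-state decomposition of a density operator (the diagrammatic counterpart of the spectral theorem for positive self-adjoint operators, as found in \cite{CoeckeBOOK}) to produce the ensemble $\{(p_{ki},|\phi_{ki}\rangle)\}_{i}$. Combined with the classical spider indexed by $k$, this reproduces the full doubly mixed diagram, completing the correspondence.
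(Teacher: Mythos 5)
Your forward direction is essentially the paper's own proof: the paper likewise rewrites the doubly mixed diagram in doubled (thick-line) notation, slides the four-legged classical spider into place via the Choi--Jamiolkowski move, and applies spider fission to exhibit exactly the disentangled template in which the two halves are a doubled box $\hat f$ and its conjugate. That is in fact all the paper proves: it stops after observing that the resulting diagram ``is the general form of a disentangled state \dots with only one restriction: the bipartite state has to be conjugate-symmetric'', reading conjugate-symmetry off the decomposition itself rather than establishing a converse.

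Your converse, however, has a genuine gap at the step ``the conjugate-symmetry hypothesis forces $\sigma_{k}^{(2)}$ to be the conjugate of $\sigma_{k}^{(1)}$''. Conjugate-symmetry is a property of the \emph{state}, whereas the separable decomposition $\sum_{k}r_{k}\,\sigma_{k}^{(1)}\otimes\sigma_{k}^{(2)}$ supplied by disentanglement is highly non-unique; invariance of the sum under swap-and-conjugate does not make each summand individually of the form $\sigma_{k}\otimes\overline{\sigma_{k}}$. For instance, $\sigma_{1}\otimes\overline{\sigma_{2}}+\sigma_{2}\otimes\overline{\sigma_{1}}$ is invariant under that symmetry although no single term is. (This is the same phenomenon as a self-adjoint operator admitting non-self-adjoint rank-one decompositions, or a real spectral decomposition with negative weights.) To close the converse you would either need to read ``conjugate-symmetric disentangled'' the way the paper implicitly does---namely, as admitting a disentangled decomposition whose second half is the conjugate of the first, in which case the equivalence is immediate because every rewrite in your forward direction is reversible---or do genuine work to symmetrise an arbitrary decomposition. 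By contrast, the obstacle you flag as the main one is not actually an issue: the paper's definition of disentangled already requires each half to be a doubled box joined by a single classical wire, so the pure-state (spectral-style) decomposition you invoke applies to each half directly.
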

\begin{proof}
Consider a vector resulting from double mixing, and rewrite it using the thick-line notation: $||\hat{\phi}\rangle\rangle = |\phi\rangle\overline{|\phi\rangle}$:
\[
\scalebox{0.6}{\tikzfig{doublemixstate_phi}} \;\; \rightarrow \;\; \scalebox{0.6}{\tikzfig{doublemixstate_phi_thick}}
\]
Next, move the spider downwards using the Choi-Jamiolkowski isomorphism, and use $\hat{f}$ for the operator resulting from the isomorphism applied to $||\hat{\phi}\rangle\rangle$. Then lastly, use spider fission to arrive at:
\[
\scalebox{0.6}{\tikzfig{doublemixstate_phi_thick}} \;\;=\;\; \scalebox{0.6}{\tikzfig{doublemixstate_phi_thick_down}} \;\;=\;\; \scalebox{0.6}{\tikzfig{doublemixstate_phi_thick_down_disentangled}}
\]
This is the general form of a disentangled state as described above, with only one restriction: the bipartite state has to be conjugate-symmetric.
\end{proof}

Contrast this with a vector resulting from double dilation:
Consider a vector resulting from double mixing and follow the same steps as above in rewriting:
\[
\scalebox{0.6}{\tikzfig{dualdensitymatrix1_phi}} \;\; = \;\;
\scalebox{0.6}{\tikzfig{doubledilation_phi_thick}} \;\;=\;\; \scalebox{0.6} {\tikzfig{doubledilation_phi_thick_down}}
\]
This vector has the same symmetry as the one resulting from double mixing, but it is not necessarily disentangled. The symmetries introduced by mixing and dilation completely characterise the resulting vectors, which we show in the next section.

\section{The characterising symmetries of double dilation}\label{section:characterisation_twicedilation}

When we consider $|||\Psi\rangle\rangle\rangle$ from equation (\ref{result:doublemix}), there are two ways in which we can turn the vector $|||\Psi\rangle\rangle\rangle$ into an operator, both using the Choi-Jamiolkowski isomorphism:
\begin{align}
 |||\Psi\rangle\rangle\rangle = & \sum_{k=1}^{m}\sum_{i=1}^{n_k}\sum_{j=1}^{n_k} r_k p_{ki} p_{kj} |\phi_{ki}\rangle \overline{|\phi_{ki}\rangle}|\phi_{kj}\rangle \overline{|\phi_{kj}\rangle} \nonumber \\
  \text{operator}_1 = & \sum_{k=1}^{m}\sum_{i=1}^{n_k}\sum_{j=1}^{n_k} r_k p_{ki} p_{kj} |\phi_{ki}\rangle\overline{|\phi_{ki}\rangle} \langle\phi_{kj}| \overline{\langle\phi_{kj}|} \\
  \text{operator}_2 = & \sum_{k=1}^{m}\sum_{i=1}^{n_k}\sum_{j=1}^{n_k} r_k p_{ki} p_{kj} |\phi_{ki}\rangle|\phi_{kj}\rangle \langle\phi_{ki}| \langle\phi_{kj}|
\end{align}
Similarly for the doubly dilated vectors from equation (\ref{result:doublereduce}), for which we give the diagram expressions:
\begin{align}
  \scalebox{0.5}{\tikzfig{dualdensitymatrix1_phi}} \mapsto \;\; & \;\, \scalebox{0.5}{\tikzfig{densitymatrix1}} \;\; = \;\; \scalebox{0.5}{\tikzfig{dualdensitymatrix1_dm1}} \label{dm1} \\
   or\;\; \mapsto\;\; & \scalebox{0.5}{\tikzfig{densitymatrix2}} \;\; = \;\; \scalebox{0.5}{\tikzfig{dualdensitymatrix1_dm2}} \label{dm2}
\end{align}

In both cases, the resulting operator is positive semi-definite and self-adjoint. In other words, the operators are density operators. We call these \emph{the CJ-density operators of} the vector (from Choi-Jamiolkowski). The property of having two CJ-density operators completely characterises vectors resulting from double dilation.

\begin{theorem}\label{thm:characterisation2}
 Let $\phi$ be any normalised vector in any finite Hilbert space. Then $\phi$ has two CJ-density operators $CJ_1$ and $CJ_2$ iff $\phi$ is a result from double dilation.
\end{theorem}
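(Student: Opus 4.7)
My plan is to prove each direction separately, with the forward implication being straightforward diagram-chasing and the backward implication requiring a two-stage application of the factorisation theorem for positive morphisms.

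For the direction $(\Leftarrow)$, I would unpack the doubly dilated form of $\phi$ (as defined in section \ref{section:constuction}) and compute $CJ_1$ and $CJ_2$ directly by bending the appropriate wires, as already displayed in equations (\ref{dm1}) and (\ref{dm2}). Each of those explicit expressions takes the shape of a diagram composed with its own dagger-reflection, which is manifestly positive and self-adjoint. Hence both CJ-views are density operators essentially by inspection.

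For the direction $(\Rightarrow)$ the key tool is the standard categorical characterisation of positive morphisms (see \cite{CoeckeBOOK}): in a $\dagger$-compact category, $\rho$ is positive iff $\rho = f^\dagger \circ f$ for some $f$. Applying this to $CJ_1$ and undoing the Choi-Jamiolkowski wire-bend translates the factorisation into a diagrammatic form for $\phi$ itself: there must exist a morphism $h_1$ with
\[
\phi \;=\; (h_1 \otimes \bar h_1) \circ \cup_1,
\]
where $\cup_1$ is a cap connecting the ``inner'' pair of systems. In other words, positivity of $CJ_1$ alone already forces $\phi$ to be a single dilation of an auxiliary vector/morphism $h_1$.

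Next I would substitute this expression for $\phi$ into the defining diagram for $CJ_2$, obtaining $CJ_2$ as an explicit diagram built from $h_1$, $\bar h_1$ and caps, but wired differently from $CJ_1$. Positivity of $CJ_2$ then forces this new diagram to admit a factorisation of the form $g^\dagger \circ g$. The heart of the proof is showing that this is possible only if $h_1$ itself decomposes as $h_1 = (k \otimes \bar k) \circ \cup_2$ for some $k$, i.e.\ $h_1$ is itself a dilation. Granting this, substituting back into the expression for $\phi$ yields, after reorganising the caps, exactly the doubly dilated form with $k$ playing the role of the inner $\phi_{ABC}$.

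The main obstacle will be justifying the final deduction in the previous paragraph, namely that positivity of the specific diagram built out of $h_1$, $\bar h_1$ and caps already implies the nested dilation structure of $h_1$. My plan is to use spider fusion/fission to rearrange this diagram until the positivity condition becomes a statement that a certain reduced operator is pure, and then invoke Proposition~6.78 of \cite{CoeckeBOOK} (the same ingredient used in the proof of Theorem \ref{theorem_strictsubspace}) to split off the pure factor and extract the inner dilation of $h_1$. Once this step is in hand, the chain $\phi \mapsto h_1 \mapsto k$ assembles $\phi$ into the double-dilation form and the characterisation is complete.
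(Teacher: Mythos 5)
First, a caveat about the comparison you asked for: this paper does not actually contain a proof of Theorem~\ref{thm:characterisation2} --- it explicitly defers it to the extended ArXiv version \cite{Zwart2017} --- so your argument can only be judged on its own merits. On those terms, your $(\Leftarrow)$ direction is fine and is exactly what equations~(\ref{dm1}) and~(\ref{dm2}) already display: each CJ-view of a doubly dilated vector is manifestly of the form $f^\dagger \circ f$, hence positive and self-adjoint. Your first move in the $(\Rightarrow)$ direction is also sound: positivity of $CJ_1$ gives $CJ_1 = f^\dagger \circ f$, and bending the wires back yields $\phi = (h_1 \otimes \bar h_1)\circ \cup$ with $h_1 = f^\dagger$; note only that the cup lives on a \emph{fresh} ancilla system $D$ introduced by the factorisation, not on ``the inner pair of systems'' of $\phi$ as your phrasing suggests. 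This is just the single-dilation ($n=1$) case of the characterisation.

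The genuine gap is the step you yourself flag as ``the main obstacle'', and it is not a loose end --- it is where the entire content of the theorem lives, and your plan for it does not go through as written. Concretely, after the first factorisation the hypothesis $CJ_2 \geq 0$ becomes (up to wire-bending conventions) positivity of $\sum_d M_d \otimes \bar M_d$, where $M_d$ is $h_1|d\rangle$ read as a matrix on $A$; what must be extracted from this is that the index set $D$ admits a decomposition $D \cong C \otimes \bar C$ with $M_{(k,l)} = V_k V_l^\dagger$, which is precisely the rainbow. Two problems with your route. (i) Proposition~6.78 of \cite{CoeckeBOOK} has as its \emph{hypothesis} that a reduced operator is \emph{pure}; in the proof of Theorem~\ref{theorem_strictsubspace} that hypothesis was available because the operator under scrutiny was the manifestly pure identity. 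Here you are given positivity of a re-paired operator, not purity of any marginal, and you offer no mechanism for converting one into the other, so the proposition cannot simply be ``invoked''. (ii) The factorisation $CJ_1 = f^\dagger f$ is determined only up to an isometry on the ancilla, and the target property (``$h_1$ is itself a dilation'') presupposes a tensor decomposition of $D$ that is nowhere given a priori; your argument must either work for every admissible $f$ or fix a canonical choice such as $f = \sqrt{CJ_1}$ and control the residual freedom. Until the implication ``$CJ_2 \geq 0$ forces the nested dilation structure of $h_1$'' is actually carried out, the hard direction remains unproven.
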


The proof of this theorem can be found in the extended version of this paper on ArXiv \cite{Zwart2017}.

\section{Multiple iterations}\label{section:generalisation}

We generalise mixing and dilation by iterating both constructions not just twice, but any finite number of times. The iterated version of dilation has already been thoroughly studied in Ashoush's Master thesis \cite{Ashoush}. Here, we still give a sketch of the results of each iteration, so we can contrast them with the results from iterated mixing. For mixing, we also just give the resulting diagrams, trusting that the reader can imagine how to generalise the construction given in section \ref{section:mixing} to more than two iterations.

The $0^{th}$ iteration of both constructions is just doing nothing, so we have normal vectors:
\[
\scalebox{0.6}{\tikzfig{phi_state}}
\]

\newpage
Mixing and dilation once turns these vectors into ones of form:
\begin{center}
  \begin{tabular}{ccc}
    \scalebox{0.6}{\tikzfig{reducedstate_nolabel}} &  & \scalebox{0.6}{\tikzfig{reducedstate_nolabel}} \\
    \small mixing &  & \small dilation \\
    \end{tabular}
\end{center}

The difference between the two shows in the second iteration:
\begin{center}
  \begin{tabular}{ccc}
    \scalebox{0.6}{\tikzfig{doublemixstate_phi}} &  & \scalebox{0.6}{\tikzfig{dualdensitymatrix1_phi}} \\
    \small double mixing &  & \small double dilation \\
    \end{tabular}
\end{center}

We iterate both constructions a third time, first mixing:
\[
\scalebox{0.6}{\tikzfig{3mixture_phi}}
\]

and then dilation:
\[
\scalebox{0.6}{\tikzfig{cpm3state_phi}}
\]

In general, the $n^{th}$ iteration of mixing takes the result from the previous iteration and tensors it with its conjugate. Then, all the resulting boxes are connected to a single new spider with $2^n$ legs.

On the other hand, the $n^{th}$ iteration of dilation does also take the tensor product of the previous iteration and its conjugate, but then makes $2^{n-1}$ nested connections (a rainbow), connecting each box from the last iteration to its counterpart in the conjugate half.
\begin{center}
  \begin{tabular}{ccc}
    \scalebox{0.8}{\tikzfig{general_mixing}} & & \scalebox{0.8}{\tikzfig{general_cpm}} \\
    \small mixing & \small vs & \small dilation \\
  \end{tabular}
\end{center}

\subsection{Always a strict subspace}

In every iteration except for the first, mixing yields a proper subspace of dilation. This emphasises again that mixing and dilation are two non-equivalent constructions.
\begin{theorem}
 For all $n \geq 2$, $n$ iterations of mixing yields a proper subspace of the result from $n$ iterations of dilation.
\end{theorem}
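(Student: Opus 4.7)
The claim has two halves: the inclusion $n$-mixing $\subseteq$ $n$-dilation and its strictness, both for every $n \geq 2$. The base case $n = 2$ is already established by Theorems~\ref{theorem_subspace} and~\ref{theorem_strictsubspace}, and the plan is to lift each half uniformly in $n$.

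For the inclusion half, the approach generalises the spider-fission argument of Theorem~\ref{theorem_subspace}. In the $n$-mixing diagram, the outermost level consists of a single $2^n$-legged spider tying together all $2^n$ boxes, on top of the recursive cap-and-spider structure inherited from the first $n-1$ iterations. First I would apply spider fission repeatedly to this top-level spider, breaking it into a family of smaller spiders whose external legs match the pairing pattern of the $n$-dilation rainbow, and then absorb each auxiliary spider into the adjacent box. Because the identities used are entirely those of the special commutative Frobenius algebra, the rewriting is uniform in $n$ and directly mimics the four-box argument given for $n=2$.

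For strictness, the plan is induction on $n$, with base case $n = 2$ given by Theorem~\ref{theorem_strictsubspace}. Suppose $\Psi^{(n)}$ is an $n$-dilated vector that is not $n$-mixable. Define $\Psi^{(n+1)}$ by applying one further iteration of dilation to $\Psi^{(n)}$ (viewed as a bipartite vector via a chosen factorisation of its underlying space); by construction $\Psi^{(n+1)}$ is $(n+1)$-dilated. To derive the contradiction, assume $\Psi^{(n+1)}$ is also $(n+1)$-mixable. Then apply the same reduction simultaneously to both expressions: discard the two external wires corresponding to the outermost rainbow cap added at the $(n+1)$-st iteration. On the dilation side, this outer cap closes up into a nonzero scalar, returning $\Psi^{(n)}$ up to that scalar. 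On the mixing side, the two discards meet the outermost $2^{n+1}$-legged spider and, by the Frobenius counit law (discard $=$ spider's counit), they absorb into the spider, reducing it to a $2^n$-legged one; combined with the recursive inner structure, the resulting diagram collapses to the $n$-mixing form. Hence $\Psi^{(n)}$ would be $n$-mixable, contradicting the inductive hypothesis.

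The main obstacle is the mixing-side reduction: the two discards do not land directly on the outer spider's legs but pass first through the two relevant boxes, which also carry internal wires from the inherited caps and sub-spiders. The crux is to choose the tensor factorisation used to form $\Psi^{(n+1)}$ so that these two boxes factor as the identity (or as some fixed pure process whose discard yields a nonzero scalar) on the discarded wires, after which the spider fusion law applies cleanly and the surviving diagram is genuinely the $n$-mixing form rather than some degenerate collapse. Verifying this compatibility — and checking that no extra scalars or spurious caps appear — is where all the real work of the proof lies.
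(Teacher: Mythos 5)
Your inclusion half is fine and is exactly the route the paper intends: the proof of Theorem~\ref{theorem_subspace} generalises verbatim, splitting the top-level $2^n$-legged spider by spider fission into $2^n$ small spiders arranged in the rainbow pattern and absorbing them into the boxes, with the inner levels handled recursively. The strictness half is where you diverge from the paper and where your argument has a genuine gap. The paper's intended proof is a direct generalisation of Theorem~\ref{theorem_strictsubspace}: exhibit an $n$-dilated vector built from nested cups whose Choi--Jamiolkowski transform is an identity, and reuse the same purity-plus-$\circ$-separation argument to show no $n$-mixture can have that transform. You instead propose an induction via a reduction map, and the specific reduction you describe does not work on either side. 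On the dilation side, discarding two external wires of $\Psi^{(n+1)}$ cannot return $\Psi^{(n)}$: the $(n+1)$-iterate has $2^{n+1}$ external wire groups and the $n$-iterate has $2^n$, so you would need to remove half of all wires, not two; moreover the $2^n$ new caps are attached to \emph{internal} (traced-out) wires, so there is no ``external wire of the outermost rainbow cap'' to discard. On the mixing side, the discards land on the boxes of the hypothetical $(n+1)$-mixture representation --- boxes you do not get to choose, since they come from the assumed representation, not from your construction of $\Psi^{(n+1)}$ --- and even if a discard did reach the outer spider, a $2^{n+1}$-legged spider with two legs removed is a $(2^{n+1}-2)$-legged spider, not a $2^n$-legged one. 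So the reduction neither recovers $\Psi^{(n)}$ nor produces an $n$-mixture, and the contradiction never materialises.

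Your induction can be repaired, but with a different reduction. Take the $(n+1)$-st dilation over a \emph{trivial} (one-dimensional) system, so that $\Psi^{(n+1)}=\Psi^{(n)}\otimes\overline{\Psi^{(n)}}$ is a legitimate $(n+1)$-dilation, and let $F=\mathrm{id}\otimes(\textrm{discard})$ where the discard acts on the \emph{entire} conjugate half (caps on each $A\otimes\overline{A}$ pair). Since any $(n+1)$-mixture has the form $\sum_k r_k\,\Phi_k\otimes\overline{\Phi_k}$ with each $\Phi_k$ an $n$-mixture, linearity gives $F\bigl(\sum_k r_k\,\Phi_k\otimes\overline{\Phi_k}\bigr)=\sum_k r_k\,\mathrm{Tr}(\Phi_k)\,\Phi_k$, a nonnegative combination of $n$-mixtures and hence an $n$-mixture; while $F(\Psi^{(n)}\otimes\overline{\Psi^{(n)}})=\mathrm{Tr}(\Psi^{(n)})\,\Psi^{(n)}$ is a strictly positive multiple of the non-mixable $\Psi^{(n)}$. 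That yields the contradiction you want, at the cost of being a genuinely different (and arguably slicker) argument than the paper's direct generalisation of the counterexample; but as written, with ``two discarded wires absorbing into the spider,'' the step fails.
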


The proof is an easy generalisation of the proofs of Theorems \ref{theorem_subspace} and \ref{theorem_strictsubspace}.

\subsection{More symmetry}\label{section:general_symmetry}

Vectors resulting from double dilation were characterised by having two CJ-density operators. This suggests that those resulting from $n$ iterations of dilation are precisely those that have $n$ CJ-density operators, capturing the extra symmetry introduced by each iteration of dilation.

\begin{theorem}\label{thm:characterisation_n}
Let $\phi$ be any vector in a finite Hilbert space. Then $\phi$ has $n$ CJ-density operators
$CJ_1, \ldots, CJ_n$ iff $\phi$ is the result of $n$ iterations of dilation.
\end{theorem}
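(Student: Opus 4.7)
The plan is to prove this by induction on $n$, generalising the argument that yields Theorem \ref{thm:characterisation2}. The base case $n = 1$ is (essentially) the standard Choi--Jamio{\l}kowski correspondence: a vector has a single CJ-density operator iff it is the purification of that operator, which is exactly the result of one iteration of dilation. The case $n = 2$ is already established.

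For the inductive step, the ($\Leftarrow$) direction should be a direct diagrammatic check. A vector produced by $n$ iterations of dilation has $2^n$ legs arranged in the nested rainbow pattern drawn in section \ref{section:generalisation}; each of the $n$ nesting levels induces a natural cut of the legs into an ``input half'' and an ``output half''. Applying the Choi--Jamio{\l}kowski isomorphism along that cut produces a positive semi-definite self-adjoint operator, because the rainbow at that level pairs each box with its conjugate partner. This yields the required $n$ CJ-density operators $CJ_1, \ldots, CJ_n$.

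For the ($\Rightarrow$) direction, I would peel off the outermost iteration. The outermost CJ-density operator $CJ_n$ corresponds to the coarsest possible cut, separating the diagram from its mirror image across the latest doubling. As in the proof of Theorem \ref{theorem_strictsubspace}, one combines the self-adjointness and positivity of $CJ_n$ with the ``reduced-pure implies factors'' principle (Proposition 6.78 of \cite{CoeckeBOOK}) to conclude that $\phi$ splits as an outer rainbow connecting two halves that are mutually conjugate. The inner half is a vector $\phi'$ on which the remaining $n-1$ CJ-cuts inherit positive semi-definiteness and self-adjointness from $CJ_1, \ldots, CJ_{n-1}$. By the inductive hypothesis, $\phi'$ is the result of $n-1$ iterations of dilation, and wrapping the outermost rainbow back on exhibits $\phi$ as the result of $n$ iterations.

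The main obstacle will be compatibility between the $n$ cuts: one must show that $n$ simultaneously valid CJ-density structures genuinely correspond to nested cuts rather than crossing or otherwise incompatible partitions of the $2^n$ wires. The rigidity should follow from the fact that two cuts that cross cannot both be self-adjoint on a vector that is not already highly structured, so the cuts must form a laminar family, which then forces the rainbow nesting. Verifying this rigidity carefully, and then checking that factoring out the outermost rainbow preserves the inner $n-1$ CJ-density conditions without spurious cross-terms between the two halves, is the delicate step where most of the diagrammatic work will be concentrated.
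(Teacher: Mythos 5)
Your overall skeleton --- induction on $n$ with Theorem~\ref{thm:characterisation2} as the base case, peeling off one dilation layer per step --- is exactly the strategy the paper announces (the inductive step itself is only carried out in the appendix of the ArXiv version \cite{Zwart2017}, so only the strategy can be compared here). However, two points need attention. First, the obstacle you single out as the main one --- ruling out ``crossing or otherwise incompatible partitions'' and establishing a laminar family of cuts --- is not actually an obstacle: the $n$ CJ-density operators are by definition the operators obtained from the $n$ \emph{canonical} wire-bendings determined by the type $(A\otimes\overline{A})^{2^{n-1}}$ (this is why the paper can remark that having $n$ CJ-density operators already fixes the type). The nesting of the cuts is given, not something to be derived, so no rigidity argument is needed there.

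Second, the step you pass over quickly is the one that carries the real weight. Positivity of $CJ_n$ gives a factorisation $CJ_n = f^{\dagger}\circ f$, hence the outer rainbow; but $f$ is determined only up to an isometry on the internal wire, and the remaining $n-1$ CJ conditions are hypotheses about $\phi = \overline{f}\circ_B f$, not about $f$ itself. Your phrase ``the remaining $n-1$ CJ-cuts inherit positive semi-definiteness and self-adjointness'' asserts precisely what must be proved: that positivity of a cut of the composite $\overline{f}\circ_B f$ forces positivity of the corresponding cut of (a suitable representative of) $f$, so that the inductive hypothesis applies to the inner vector. This is the analogue of the interplay between the two cuts that makes the $n=2$ case nontrivial, and without an argument for it (e.g.\ a diagrammatic purity/disconnection argument in the style of the proof of Theorem~\ref{theorem_strictsubspace}, or an explicit use of \cite[Proposition 6.78]{CoeckeBOOK} at the level of the inner wire) the induction does not close. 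As it stands the proposal is a correct plan with the crux left open.
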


The proof is by induction. Theorem \ref{thm:characterisation2} provides the base case, the rest of the induction is included in the appendix of the extended version on ArXiv \cite{Zwart2017}. Note that the fact that $\phi$ has $n$ CJ-density operators immediately implies that its type is of form $(A\otimes\overline{A})^{2^{n-1}}$, that is, it is a vector in Hilbert space $(A\otimes\overline{A})^{2^{n-1}}$.
Of course, as mixing always yields a subspace of dilation, vectors resulting from $n$ iterations of mixing also have the extra symmetry properties. They stay disentangled in the way discussed in section \ref{section:characterisation_twicemix}.

\section{Discussion and outlook}

Although in the physics community it us usually assumed that dilation and mixing are one and the same thing, this is clearly not the case.  The heart of our result can simply be depicted as:
\begin{center}
  \begin{tabular}{ccc}
    \scalebox{0.6}{\tikzfig{general_spider_8legs}} & \raisebox{5mm}{$\neq$} & \scalebox{0.6} {\tikzfig{general_rainbow}} \\
      \multicolumn{3}{c}{\small spiders are not rainbows} \\
  \end{tabular}
\end{center}
That is, a convex sum over pure operators (mixing, yielding a spider diagram) is not the same as a partially traced out composite system (dilation, yielding a rainbow diagram), even though both constructions happen to coincide in the case of density operators (i.e.~the result of applying them only once), since:
\begin{center}
  \begin{tabular}{ccc}
    \scalebox{0.6}{\tikzfig{2_spider_2legs}} & \raisebox{2.5mm}{$=$} & \scalebox{0.6}{\tikzfig{2_rainbow}} \\
  \end{tabular}
\end{center}

In physics, this result may impact axiomatic understanding of density matrices, and may also contribute to either crafting interesting toy theories, or adjoining extra variables to theories.

In NLP, it is worth considering which of the two, double mixing and double dilation, could serve as a model for both ambiguity and lexical entailment. Notice that in dictionaries, disambiguation of words is always first by hypernym, then by entailment. If this order is something that the model should reflect, then double mixing is a good choice: the asymmetry is reflected by the spiders appearing in the mixtures, causing a clear distinction between the first and second iterations of mixing. If however, this order of disambiguation in dictionaries is considered artificial, then the more general double dilation might be the preferred option.

The second result presented in this paper is the characterisation of both constructions. Dilation yields vectors that have $n$ CJ-density operators, which nicely exposes the symmetries introduced by the construction. Mixtures on the other hand, while having the same symmetries as dilated vectors, are special cases of disentangled states. This actually comes as no surprise: mixtures are almost by definition impure things. For future research, it would be ideal to find a characterisation for vectors resulting from double dilation that are \em not \em the result of double mixing.

As we mentioned in section \ref{section:diagram_notation}, the results in this paper apply in a more general setting than finite Hilbert spaces. To be precise, they hold in any \em spider category \em (dagger compact closed category with a Frobenious structure). One such category is the category of sets and relations (Rel). Oscar Cunningham and Dan Marsden have looked into the application of iterated dilation to the states in Rel\cite{Marsden15,Cunningham-privite}. In ongoing research, we are now applying iterated mixing to Rel as well. Hopefully, this will give us some hints about vectors resulting from double dilation but not from double mixing.

\bibliographystyle{eptcs}
\bibliography{lib-for-submission}{}

\begin{thebibliography}{10}
\providecommand{\bibitemdeclare}[2]{}
\providecommand{\surnamestart}{}
\providecommand{\surnameend}{}
\providecommand{\urlprefix}{Available at }
\providecommand{\url}[1]{\texttt{#1}}
\providecommand{\href}[2]{\texttt{#2}}
\providecommand{\urlalt}[2]{\href{#1}{#2}}
\providecommand{\doi}[1]{doi:\urlalt{http://dx.doi.org/#1}{#1}}
\providecommand{\bibinfo}[2]{#2}

\bibitemdeclare{phdthesis}{Ashoush2015}
\bibitem{Ashoush2015}
\bibinfo{author}{Daniela \surnamestart Ashoush\surnameend}
  (\bibinfo{year}{2015}): \emph{\bibinfo{title}{{Categorical Models of Meaning
  : Accommodating for Lexical Ambiguity and Entailment}}}.
\newblock \bibinfo{type}{Master's thesis}, \bibinfo{school}{Oxford University}.

\bibitemdeclare{article}{Ashoush}
\bibitem{Ashoush}
\bibinfo{author}{Daniela \surnamestart Ashoush\surnameend} \&
  \bibinfo{author}{Bob \surnamestart Coecke\surnameend} (\bibinfo{year}{2016}):
  \emph{\bibinfo{title}{{Dual Density Operators and Natural Language
  Meaning}}}.
\newblock {\sl \bibinfo{journal}{Electronic Proceedings in Theoretical Computer
  Science}} \bibinfo{volume}{221}, pp. \bibinfo{pages}{1--10},
  \doi{10.4204/EPTCS.221.1}.

\bibitemdeclare{article}{Balkr}
\bibitem{Balkr}
\bibinfo{author}{Esma \surnamestart Balkir\surnameend},
  \bibinfo{author}{Dimitri \surnamestart Kartsaklis\surnameend} \&
  \bibinfo{author}{Mehrnoosh \surnamestart Sadrzadeh\surnameend}:
  \emph{\bibinfo{title}{{Sentence Entailment in Compositional Distributional
  Semantics}}}.
\newblock {\sl \bibinfo{journal}{International Symposium on Artificial
  Intelligence and Mathematics (ISAIM)}}, \doi{10.1007/978-3-319-28678-5\_1}.
\newblock \urlprefix\url{https://arxiv.org/abs/1512.04419}.

\bibitemdeclare{inproceedings}{Balkir2016}
\bibitem{Balkir2016}
\bibinfo{author}{Esma \surnamestart Balkir\surnameend},
  \bibinfo{author}{Mehrnoosh \surnamestart Sadrzadeh\surnameend} \&
  \bibinfo{author}{Bob \surnamestart Coecke\surnameend} (\bibinfo{year}{2016}):
  \emph{\bibinfo{title}{{Distributional sentence entailment using density
  matrices}}}.
\newblock In: {\sl \bibinfo{booktitle}{Lecture Notes in Computer Science
  (including subseries Lecture Notes in Artificial Intelligence and Lecture
  Notes in Bioinformatics)}}, \bibinfo{volume}{9541}, pp.
  \bibinfo{pages}{1--22}, \doi{10.1007/978-3-319-28678-5\_1}.
\newblock \urlprefix\url{https://arxiv.org/abs/1506.06534}.

\bibitemdeclare{article}{bankova2016graded}
\bibitem{bankova2016graded}
\bibinfo{author}{Desislava \surnamestart Bankova\surnameend},
  \bibinfo{author}{Bob \surnamestart Coecke\surnameend},
  \bibinfo{author}{Martha \surnamestart Lewis\surnameend} \&
  \bibinfo{author}{Daniel \surnamestart Marsden\surnameend}
  (\bibinfo{year}{2016}): \emph{\bibinfo{title}{Graded Entailment for
  Compositional Distributional Semantics}}.
\newblock {\sl \bibinfo{journal}{CoRR}} \bibinfo{volume}{abs/1601.04908}.
\newblock \urlprefix\url{http://arxiv.org/abs/1601.04908}.

\bibitemdeclare{article}{Barnum-Barrett}
\bibitem{Barnum-Barrett}
\bibinfo{author}{Howard \surnamestart Barnum\surnameend} \&
  \bibinfo{author}{Jonathan \surnamestart Barrett\surnameend}:
  \bibinfo{note}{Private communication}.

\bibitemdeclare{inproceedings}{blacoe2013quantum}
\bibitem{blacoe2013quantum}
\bibinfo{author}{William \surnamestart Blacoe\surnameend},
  \bibinfo{author}{Elham \surnamestart Kashefi\surnameend} \&
  \bibinfo{author}{Mirella \surnamestart Lapata\surnameend}
  (\bibinfo{year}{2013}): \emph{\bibinfo{title}{{A Quantum-Theoretic Approach
  to Distributional Semantics}}}.
\newblock In: {\sl \bibinfo{booktitle}{North American Chapter of the
  Association for Computational Linguistics: Human Language Technologies (NAACL
  HLT)}}, pp. \bibinfo{pages}{847--857}.

\bibitemdeclare{article}{teleling}
\bibitem{teleling}
\bibinfo{author}{Stephen \surnamestart Clark\surnameend}, \bibinfo{author}{Bob
  \surnamestart Coecke\surnameend}, \bibinfo{author}{Edward \surnamestart
  Grefenstette\surnameend}, \bibinfo{author}{Stephen \surnamestart
  Pulman\surnameend} \& \bibinfo{author}{Mehrnoosh \surnamestart
  Sadrzadeh\surnameend} (\bibinfo{year}{2014}): \emph{\bibinfo{title}{A quantum
  teleportation inspired algorithm produces sentence meaning from word meaning
  and grammatical structure}}.
\newblock {\sl \bibinfo{journal}{Malaysian Journal of Mathematical Sciences}}
  \bibinfo{volume}{8}, pp. \bibinfo{pages}{15--25}.
\newblock \urlprefix\url{https://arxiv.org/abs/1305.0556}.

\bibitemdeclare{inproceedings}{Kindergarten}
\bibitem{Kindergarten}
\bibinfo{author}{Bob \surnamestart Coecke\surnameend} (\bibinfo{year}{2006}):
  \emph{\bibinfo{title}{Kindergarten quantum mechanics}}.
\newblock In: {\sl \bibinfo{booktitle}{AIP conference proceedings - Quantum
  Theory: Reconsiderations of the Foundations III}}, \bibinfo{volume}{810}, pp.
  \bibinfo{pages}{81--98}, \doi{10.1063/1.2158713}.
\newblock \urlprefix\url{https://arxiv.org/abs/quant-ph/0510032}.

\bibitemdeclare{article}{Coecke2015}
\bibitem{Coecke2015}
\bibinfo{author}{Bob \surnamestart Coecke\surnameend} \& \bibinfo{author}{Aleks
  \surnamestart Kissinger\surnameend} (\bibinfo{year}{2015}):
  \emph{\bibinfo{title}{{Categorical Quantum Mechanics I: Causal Quantum
  Processes}}}.
\newblock \urlprefix\url{http://arxiv.org/abs/1510.05468}.

\bibitemdeclare{article}{Coecke2016}
\bibitem{Coecke2016}
\bibinfo{author}{Bob \surnamestart Coecke\surnameend} \& \bibinfo{author}{Aleks
  \surnamestart Kissinger\surnameend} (\bibinfo{year}{2016}):
  \emph{\bibinfo{title}{{Categorical Quantum Mechanics II: Classical-Quantum
  Interaction}}}.
\newblock \doi{10.1142/S0219749916400207}.
\newblock \urlprefix\url{https://arxiv.org/abs/1605.08617}.

\bibitemdeclare{book}{CoeckeBOOK}
\bibitem{CoeckeBOOK}
\bibinfo{author}{Bob \surnamestart Coecke\surnameend} \& \bibinfo{author}{Aleks
  \surnamestart Kissinger\surnameend} (\bibinfo{year}{2017}):
  \emph{\bibinfo{title}{Picturing Quantum Processes: {A} First Course in
  Quantum Theory and Diagrammatic Reasoning}}.
\newblock \bibinfo{publisher}{Cambridge University Press},
  \doi{10.1017/9781316219317}.

\bibitemdeclare{inbook}{CoeckeMartin}
\bibitem{CoeckeMartin}
\bibinfo{author}{Bob \surnamestart Coecke\surnameend} \& \bibinfo{author}{Keye
  \surnamestart Martin\surnameend} (\bibinfo{year}{2011}):
  \emph{\bibinfo{title}{A Partial Order on Classical and Quantum States}}, pp.
  \bibinfo{pages}{593--683}.
\newblock \bibinfo{publisher}{Springer Berlin Heidelberg},
  \bibinfo{address}{Berlin, Heidelberg}, \doi{10.1007/978-3-642-12821-9\_10}.

\bibitemdeclare{inbook}{CPaqPav}
\bibitem{CPaqPav}
\bibinfo{author}{Bob \surnamestart Coecke\surnameend},
  \bibinfo{author}{{\'E}ric~Oliver \surnamestart Paquette\surnameend} \&
  \bibinfo{author}{Dusko \surnamestart Pavlovi{\'c}\surnameend}
  (\bibinfo{year}{2009}): \emph{\bibinfo{title}{Classical and Quantum
  Structuralism}}, p. \bibinfo{pages}{29–69}.
\newblock \bibinfo{publisher}{Cambridge University Press},
  \doi{10.1017/CBO9781139193313.003}.

\bibitemdeclare{article}{CPV}
\bibitem{CPV}
\bibinfo{author}{Bob \surnamestart Coecke\surnameend}, \bibinfo{author}{Dusko
  \surnamestart Pavlovi{\'c}\surnameend} \& \bibinfo{author}{Jamie
  \surnamestart Vicary\surnameend} (\bibinfo{year}{2013}):
  \emph{\bibinfo{title}{A new description of orthogonal bases}}.
\newblock {\sl \bibinfo{journal}{Mathematical Structures in Computer Science}}
  \bibinfo{volume}{23}, pp. \bibinfo{pages}{555--567},
  \doi{10.1017/S0960129512000047}.
\newblock \urlprefix\url{https://arxiv.org/abs/0810.0812}.

\bibitemdeclare{article}{Coeckea}
\bibitem{Coeckea}
\bibinfo{author}{Bob \surnamestart Coecke\surnameend},
  \bibinfo{author}{Mehrnoosh \surnamestart Sadrzadeh\surnameend} \&
  \bibinfo{author}{Stephen \surnamestart Clark\surnameend}
  (\bibinfo{year}{2011}): \emph{\bibinfo{title}{{Mathematical Foundations for a
  Compositional Distributional Model of Meaning}}}.
\newblock {\sl \bibinfo{journal}{Linguistic Analysis}}
  \bibinfo{volume}{36}(\bibinfo{number}{1-4}), pp. \bibinfo{pages}{345--384}.
\newblock \urlprefix\url{https://arxiv.org/abs/1003.4394}.

\bibitemdeclare{article}{Cunningham-privite}
\bibitem{Cunningham-privite}
\bibinfo{author}{Oscar \surnamestart Cunningham\surnameend}:
  \bibinfo{note}{Private communication}.

\bibitemdeclare{article}{Horodecki2007}
\bibitem{Horodecki2007}
\bibinfo{author}{Ryszard \surnamestart Horodecki\surnameend},
  \bibinfo{author}{Pawel \surnamestart Horodecki\surnameend},
  \bibinfo{author}{Michal \surnamestart Horodecki\surnameend} \&
  \bibinfo{author}{Karol \surnamestart Horodecki\surnameend}
  (\bibinfo{year}{2009}): \emph{\bibinfo{title}{{Quantum entanglement}}}.
\newblock {\sl \bibinfo{journal}{Reviews of Modern Physics}},
  \doi{10.1103/RevModPhys.81.865}.
\newblock \urlprefix\url{https://arxiv.org/abs/quant-ph/0702225}.

\bibitemdeclare{inproceedings}{Marsden15}
\bibitem{Marsden15}
\bibinfo{author}{Daniel \surnamestart Marsden\surnameend}
  (\bibinfo{year}{2015}): \emph{\bibinfo{title}{A Graph Theoretic Perspective
  on CPM(Rel)}}.
\newblock In: {\sl \bibinfo{booktitle}{Proceedings 12th International Workshop
  on Quantum Physics and Logic}}, pp. \bibinfo{pages}{273--284},
  \doi{10.4204/EPTCS.195.20}.

\bibitemdeclare{inproceedings}{Penrose1971}
\bibitem{Penrose1971}
\bibinfo{author}{Roger \surnamestart Penrose\surnameend}
  (\bibinfo{year}{1971}): \emph{\bibinfo{title}{{Applications of Negative
  Dimensional Tensors}}}.
\newblock In: {\sl \bibinfo{booktitle}{Combinatorial Mathematics and its
  Applications}}, \bibinfo{publisher}{Academic Press}, pp.
  \bibinfo{pages}{221--244}.

\bibitemdeclare{inproceedings}{Piedeleu2015}
\bibitem{Piedeleu2015}
\bibinfo{author}{Robin \surnamestart Piedeleu\surnameend},
  \bibinfo{author}{Dimitri \surnamestart Kartsaklis\surnameend},
  \bibinfo{author}{Bob \surnamestart Coecke\surnameend} \&
  \bibinfo{author}{Mehrnoosh \surnamestart Sadrzadeh\surnameend}:
  \emph{\bibinfo{title}{{Open System Categorical Quantum Semantics in Natural
  Language Processing}}}.
\newblock In: {\sl \bibinfo{booktitle}{Proceedings of the 6th Conference on
  Algebra and Coalgebra in Computer Science}},
  \doi{10.4230/LIPIcs.CALCO.2015.270}.
\newblock \urlprefix\url{http://arxiv.org/abs/1502.00831}.

\bibitemdeclare{article}{Selinger2007}
\bibitem{Selinger2007}
\bibinfo{author}{Peter \surnamestart Selinger\surnameend}
  (\bibinfo{year}{2007}): \emph{\bibinfo{title}{{Dagger Compact Closed
  Categories and Completely Positive Maps. (Extended Abstract)}}}.
\newblock {\sl \bibinfo{journal}{Electronic Notes in Theoretical Computer
  Science}} \bibinfo{volume}{170}, pp. \bibinfo{pages}{139--163},
  \doi{10.1016/j.entcs.2006.12.018}.

\bibitemdeclare{book}{VonNeumann1955}
\bibitem{VonNeumann1955}
\bibinfo{author}{John \surnamestart {Von Neumann}\surnameend}
  (\bibinfo{year}{1955}): \emph{\bibinfo{title}{{Mathematical Foundations of
  Quantum Mechanics}}}.
\newblock \doi{10.2307/2313034}.

\bibitemdeclare{inproceedings}{Weteringen}
\bibitem{Weteringen}
\bibinfo{author}{John \surnamestart van~de Wetering\surnameend}
  (\bibinfo{year}{2016}): \emph{\bibinfo{title}{Entailment Relations on
  Distributions}}.
\newblock In: {\sl \bibinfo{booktitle}{Proceedings of the 2016 Workshop on
  Semantic Spaces at the Intersection of NLP, Physics and Cognitive Science,
  SLPCS@QPL 2016, Glasgow, Scotland, 11th June 2016.}}, pp.
  \bibinfo{pages}{58--66}, \doi{10.4204/EPTCS.221.7}.

\bibitemdeclare{article}{Zwart2017}
\bibitem{Zwart2017}
\bibinfo{author}{Maaike \surnamestart Zwart\surnameend} \& \bibinfo{author}{Bob
  \surnamestart Coecke\surnameend} (\bibinfo{year}{2017}):
  \emph{\bibinfo{title}{{Double Dilation $\neq$ Double Mixing}}}.
\newblock \urlprefix\url{http://arxiv.org/abs/1704.02309}.

\end{thebibliography}
\end{document}